\documentclass[copyright,creativecommons]{eptcs}
\providecommand{\event}{AiML 2026} 

\usepackage{aiml26}

\usepackage{iftex}

\ifpdf
  \usepackage{underscore}         
  \usepackage[T1]{fontenc}        
\else
  \usepackage{breakurl}           
\fi

\usepackage[T1]{fontenc}
\usepackage[misc,geometry]{ifsym}
\usepackage{stmaryrd}
\usepackage{amsmath}
\usepackage{amssymb}
\usepackage{xspace}
\usepackage{mathtools}
\usepackage{bussproofs}\EnableBpAbbreviations     
\usepackage{proof}
\usepackage{colortbl}
\usepackage{tikz}
\usetikzlibrary{shapes.geometric, arrows,arrows.meta}
\usepackage{float}
\usepackage{pifont}

\usepackage{xifthen}%
\usepackage{url}
\usepackage[inline]{enumitem}
\setlist{topsep=.3em,left=0pt}
\usepackage{xcolor}
\usepackage{array}
\usepackage{hyperref}
\usepackage{wrapfig}
\usepackage{mathdots}

\usepackage[color=green!20,linecolor=red]{todonotes}

\newcommand{\GL}{\mathbf{G}}
\newcommand{\GKL}{\mathbf{GK}}
\newcommand{\GWCL}{\mathbf{GW}^\mathrm{c}}
\newcommand{\GWL}{\mathbf{GW}}
\newcommand{\GCL}{\mathbf{G}^\mathrm{c}}
\newcommand{\GWCM}{\mbox{$\mathrm{GW}^\mathrm{c}$}}
\newcommand{\GCM}{\mbox{$\mathrm{G}^\mathrm{c}$}}
\newcommand{\GM}{\mbox{$\mathrm{G}$}}
\newcommand{\GWM}{\mbox{$\mathrm{GW}$}}
\newcommand{\KM}{\mbox{$\mathrm{K}$}}

\newcommand{\Calcw}{\Ccal_{\mathrm{GW}}}
\newcommand{\KL}{\mathbf{K}}

\newcommand{\IPL}{\mathbf{IPL}\xspace}

\newcommand{\ruleToLTAt}{\to_{\mathrm{At}}<}
\newcommand{\ruleToLT}{\to<}
\newcommand{\ruleToLEQAt}{\to_{\mathrm{At}}\leq}
\newcommand{\ruleToLEQ}{\to\leq}
\newcommand{\ruleToGAt}{\to_{\mathrm{At}}\gt}
\newcommand{\ruleToG}{\to\gt}

\newcommand{\tto}{\leftrightarrow}
\newcommand{\Diam}{\lozenge}

\newcommand{\Phibd}[1]{\Phi^{\Box,\Diam}(#1)}

\newcommand{\LC}{\Lcal_c}

\newcommand{\Qrange}{[0,1]_{\mathbf{Q}}}

\newcommand{\precm}{\prec_{\mathrm{m}}} 
\newcommand{\precc}{\prec_{\mathrm{c}}}

\newcommand{\PV}{\Vcal}

\newcommand{\Gat}{\G_\mathrm{at}}

\newcommand{\lab}[1]{\mathrm{label}(#1)}

\newcommand{\reass}[3]{#1[#2 \coloneqq #3]}

\newdimen\arrowsize 
\pgfarrowsdeclare{arcs'}{arcs'}{...} 
{ 
  \arrowsize=0.5pt 
  \advance\arrowsize by .5\pgflinewidth 
  \pgfsetdash{}{0pt} 
  \pgfsetroundjoin   
  \pgfsetroundcap    
  \pgfpathmoveto{\pgfpoint{-4\arrowsize}{4\arrowsize}} 
  \pgfpatharc{180}{270}{4\arrowsize} 
  \pgfpatharc{90}{180}{4\arrowsize} 
  \pgfusepathqstroke 
}

\makeatletter
\newbox\xrat@below
\newbox\xrat@above
\newcommand{\coimparrow}[2][]{%
  \setbox\xrat@below=\hbox{\ensuremath{\scriptstyle #1}}%
  \setbox\xrat@above=\hbox{\ensuremath{\scriptstyle #2}}%
  \pgfmathsetlengthmacro{\xrat@len}{max(\wd\xrat@below,\wd\xrat@above)+.65em}%
  \mathrel{\tikz [>-,>=arcs',baseline=-.5ex,line width=.6pt]
                 \draw (0,0) -- node[below=-2pt] {\box\xrat@below}
                                node[above=-2pt] {\box\xrat@above}
                       (\xrat@len,0) ;}}
\makeatother

\newcommand{\gt}{\rhd}
\newcommand{\lt}{\lhd}

\newcommand{\stru}[1]{\langle #1 \rangle} 
\newcommand{\abstractorder}{\,\blacktriangledown\,}

\newcommand{\provesw}[1]{\vdash_{\mathrm{GW}} #1}
\newcommand{\nprovesw}[1]{\nvdash_{\mathrm{GW}} #1}

\newcommand{\size}[1]{|#1|}
\newcommand{\sizem}[1]{||#1||}

\newcommand{\wgname}{\mathrm{wg}}
\newcommand{\wg}[1]{\mathrm{wg}(#1)}

\newcommand{\Atm}[1]{\mathrm{At}(#1)}
\newcommand{\Atmp}[1]{\mathrm{At}^{+}\!(#1)}

\newcommand{\Bs}{\mbox{\textsc{Bs}}}

 \newcommand{\Modname}{\mathrm{Mod}}
 \newcommand{\Mod}[1]{\Modname(#1)}
\newcommand{\app}[3]{#2\mapsto_{#1} #3}

\newcommand{\M}{\mathfrak{M}}

\newcommand{\ruleAxat}{\mathrm{Ax}\xspace}

\newcommand{\Search}{\textsc{Search}}
\newcommand{\Success}{\textsc{Success}}
\newcommand{\Fail}{\textsc{Fail}}

\renewcommand{\a}{\alpha}
\renewcommand{\b}{\beta}

\renewcommand{\phi}{\varphi}
\newcommand{\s}{\sigma}
\newcommand{\D}{\Delta}
\newcommand{\G}{\Gamma}

\newcommand{\Bcal}{\mathcal{B}}
\newcommand{\Ccal}{\mathcal{C}}

\newcommand{\Ical}{\mathcal{I}}

\newcommand{\Lcal}{\mathcal{L}}

\newcommand{\Qcal}{\mathcal{Q}}

\newcommand{\Tcal}{\mathcal{T}}
\newcommand{\Vcal}{\mathcal{V}}

\newcommand{\EndEx}{\mbox{}~\hfill$\Diamond$}

\newcommand{\medfont}{\fontsize{9pt}{10pt}\selectfont}

\RequirePackage{tcolorbox}
\definecolor{tcbx_Yellow_Bright}{RGB}{250, 250, 142}
\definecolor{tcbx_Yellow_Dark}{RGB}{230, 230, 0}
\newtcolorbox{tcyellowbox}[2][]{
  beforeafter skip=1\baselineskip,
  colbacktitle=tcbx_Yellow_Dark,
  top=0.2cm,
  bottom=0.2cm,
  left=0cm,
  colframe=tcbx_Yellow_Dark, 
  colback=tcbx_Yellow_Bright,
  halign lower=flush right,
  fonttitle=\bfseries,
  fontupper=\upshape,
  fontlower=\upshape,
  coltitle=black,
  title={#2},
  #1
}

\usepackage[linesnumbered,commentsnumbered,figure,noend]{algorithm2e}

\title{A G\"odel Modal Logic Over Witnessed Models}
\author{
  Mauro Ferrari$^{*}$
  \institute{Dep.~of Theoretical and Applied Sciences\\
    Universit\`a degli Studi dell'Insubria\\
    Varese, Italy.}
  \email{mauro.ferrari@uninsubria.it}
  \and
  Camillo Fiorentini
  \institute{
    Dep.~of Computer Science\\
    Universit\`a degli Studi di Milano\\
    Milano, Italy}
  \email{fiorentini@di.unimi.it}
  \and
  Paolo Giardini
  \institute{Dep.~of Theoretical and Applied Sciences\\
    Universit\`a degli Studi dell'Insubria\\
    Varese, Italy.}
  \email{pcgiardini@uninsubria.it}
  \and
  Ricardo Oscar Rodriguez
  \institute{
    UBA-FCEyN, Dep.~De Computación\\
    Buenos Aires, Argentina}
  \email{ricardo@dc.uba.ar}
}

\newcommand{\titlerunning}{A G\"odel Modal Logic Over Witnessed  Models}
\newcommand{\authorrunning}{M. Ferrari, C. Fiorentini, P. Giardini \& R.O. Rodriguez}

\begin{document}
\maketitle

\begin{abstract}
  We introduce $\GWL$, a G\"odel modal logic based on Kripke models in
  which the value of each modal formula is witnessed by an accessible
  world. This witnessed semantics eliminates the limit-based phenomena
  that preclude the finite model property in the usual Kripke
  semantics for G\"odel modal logics, thereby yielding a more
  constructive semantic framework. We present a sound and complete
  refutation calculus for $\GWL$ and design a terminating
  backward proof-search procedure with countermodel generation. As a
  direct consequence of this procedure, $\GWL$ enjoys the finite model
  property.
\end{abstract}

\section{Introduction}\label{sec:intro}

G\"odel modal logics, $\GKL$s, provide a natural framework for reasoning
about graded notions of incompleteness, uncertainty, and vagueness by
means of necessity and possibility operators evaluated over truth
degrees. When interpreted over the real unit interval, G\"odel logic
captures a well-behaved notion of implication based on residuation,
which makes it particularly suitable for applications in knowledge
representation, as well as in multi-agent and distributed systems.

In general, the semantics of $\GKL$ is based on G\"odel Kripke models
(\emph{\GM-models}), a natural generalization of classical Kripke
semantics in which both the valuation of propositions at each world
and the accessibility relation take values in the standard G\"odel
algebra $[0,1]$, i.e.\ the algebraic semantics of the intermediate
G\"odel--Dummett Logic. The minimal logics over {\GM-models} were
thoroughly investigated by Caicedo and Rodr\'iguez
in~\cite{CaiRod2010,CaiRod2015}.  In this setting, the interpretation
of the modal operators is defined by means of infima and suprema
computed over (possibly infinite) sets of accessible worlds. However,
these bounds need not be attained: the infimum may be strictly smaller
than every truth value in the set, and dually for the
supremum. Consequently, the truth value of a modal formula may depend
on limit constructions rather than on concrete worlds of the model.
This phenomenon has significant logical consequences. For instance,
\cite{CaiRod2010} shows that the formula $\Box \neg \neg \varphi \to
\neg \neg \Box \varphi$ is valid in all finite G\"odel Kripke models,
but fails in a suitable infinite model (see
Fig.~\ref{fig:nonWitn}). Hence, the underlying logic does not enjoy the
finite model property, and semantic arguments may require genuinely
infinitary behaviour.

Witnessed models provide a principled way to overcome these
difficulties. By requiring that the relevant infima and suprema be
realized by actual accessible worlds, witnessed semantics ensures that
existential and universal behaviour is supported by explicit
witnesses. In this way, the evaluation of modal operators more closely
mirrors the behaviour of guarded quantification in first-order logic
and restores a constructive reading of the semantics. Each modal
statement is thus justified by concrete worlds rather than by purely
asymptotic approximations.

The notion of witnessed models has appeared in several related
contexts. For example, in~\cite{HC2006} an axiomatization is given for
witnessed semantics in fuzzy predicate logics. Since quantified
statements can often be translated into modal ones, this connection is
conceptually well grounded. Along these lines, H\'ajek considers
in~\cite{HajekS5} the axioms $\Diamond(\varphi \to \Box \varphi)$ and
$\Diamond(\Diamond \varphi \to \varphi)$, and proves that the resulting
logic $S5(G)^w$ is strongly complete with respect to witnessed
universal Kripke models.

Witnessed semantics has also proved fruitful in applied settings. In
the context of Fuzzy Description Logics, \cite{BDP2014} studies a
system based on witnessed G\"odel semantics and shows that it enjoys
the finite model property and a decidable satisfiability problem with
exponential-time complexity. More recently, \cite{RV26} provides an
axiomatization for  G\"odel modal logic interpreted over witnessed
Kripke models.

In general, witnessed models tend to ensure compactness-like properties and
often yield the finite model property and decidability
results. These features are crucial for automated reasoning,
since they allow infinitary semantic arguments to be replaced by
finite combinatorial ones. For these reasons, studying G\"odel modal
logics over witnessed models not only clarifies their semantic
foundations, but also strengthens their connections with well-behaved
fragments of first-order logic that play a central role in computer
science, thereby enhancing both their theoretical robustness and their
applicability.

From a proof-theoretic perspective, witnessed semantics aligns more
closely with syntactic calculi.
The existence of witnesses enables canonical model constructions that
are finite or finitely generated, supports filtration techniques, and
allows completeness proofs to proceed by standard algebraic or
canonical methods. Therefore, establishing a sound and complete
calculus for G\"odel modal logics over witnessed models is not only
feasible but also conceptually justified, providing a proof-theoretic
counterpart to their improved semantic behaviour and paving the way for
effective automated reasoning procedures.

In line with the above considerations, in~\cite{FerFioRod:2025} we
introduced the logic $\GWCL$ characterized by witnessed crisp G\"odel
Kripke models, where ``crisp'' means that the accessibility relation
takes values in the set $\{0,1\}$.  For this logic, we developed a
constraint-based refutation calculus inspired by the tableau system
of~\cite{BilkovaFK:22}, together with a terminating proof-search
procedure that yields finite countermodels whenever proof search
fails. As a consequence, we established that $\GWCL$ enjoys the finite
model property.  In the present work, we extend these results to the
logic $\GWL$, characterized by witnessed G\"odel Kripke models without
the crispness restriction. Our development follows the
strategy adopted in~\cite{FerFioRod:2025}.  We introduce a
constraint-based refutation calculus for $\GWL$, called $\Calcw$, and
prove its soundness and completeness with respect to witnessed
models. Moreover, we design a terminating proof-search procedure for
$\Calcw$ based on a standard backward search strategy.  Notably, when
proof search fails, the computation can be traced to construct a
finite countermodel for the given formula, and this proves the finite
model property for $\GWL$.  Finally, we provide an implementation of
the procedure, called \texttt{gwref} (available
at~\cite{gwrefProver}).

\section{Basic Definitions}
\label{sec:basic}

Formulas are built over a countable set of propositional variables
$\PV$, using the propositional connectives $\land$, $\lor$, $\to$, the
constant $\bot$ and the modal connectives $\Box$, $\Diam$.  We introduce the following abbreviations:
$\neg \varphi$ stands  for $\varphi\to \bot$; 
$\phi_1\tto \phi_2$  stands  for $(\phi_1\to \phi_2)\land(\phi_2\to \phi_1)$.
The symbols
$\land$, $\lor$ and $\to$ are also used to denote algebraic
operations.
The \emph{G\"odel algebra} is a structure
$\stru{[0,1],\land,\lor,\to}$ where:
\[
a\land b = \min(a,b)
\qquad
a \lor b = \max(a,b)
\qquad
a\to b =
\begin{cases}
  1 & \mbox{if $a\leq b$}\\
  b &  \mbox{if $a > b$}     
\end{cases}
\]

\noindent
A \emph{\GM-model} (\emph{G\"odel model}) $\M$ is a structure
$\stru{W,R,e}$ where $W$ is a nonempty set (the set of worlds), $R$ is
a map $W\times W\to [0,1]$ (the accessibility relation) and $e : W
\times \PV\to [0,1]$ (the evaluation function).  If $R : W\times W\to
\{0,1\}$, we say that $\M$ is \emph{crisp}; in this case, $R \subseteq
W\times W$.  The map $e$ is extended to arbitrary formulas as follows:
\[
\begin{array}{ll}
  e(w,\bot) = 0 & e(w,\a \star \b) = e(w,\a) \star e(w,\b),\;\star\in \{\,\land,\,\lor,\,\to\,\}
  \\[1ex]
  e(w,\Box \a) = \inf_{w'\in W}  \{\,R(w,w') \to e(w',\a)  \,\}   \qquad
  &
  e(w,\Diam \a) =  \sup_{w'\in W}  \{\,R(w,w') \land e(w',\a)  \,\}.               
\end{array}
\]  

\begin{figure}[t]
  \centering
  \begin{minipage}{0.45\linewidth}\small
  \begin{tikzpicture}[scale=0.8]
      \draw[fill] (0,-0.5) circle (3pt)
      +(0,-0)   node (w0)  {}  
      +(0,-0.3) node  {\medfont$w_0$}
      ;
      
      \draw[fill]  (-4,1) circle (3pt)
      +(0,0)   node (w1)  {}  
      +(-0.6,0)  node  {\medfont $w_1$} 
      +(0,1) node  {
        $\begin{array}{l}
          p=\frac{1}{2},
          q=\frac{1}{2}\\[.5ex]
          \neg \neg p =1
        \end{array}$
      }
      ;
      
      \draw[fill] (-1,1) circle (3pt)
      +(0,0)   node (w2)  {}  
      +(-0.6,0) node{$w_2$}
      +(0,1) node{
        $\begin{array}{l}
          p=\frac{1}{3},
          q=\frac{2}{3}\\[.5ex]
          \neg \neg p =1
        \end{array}$
      }
      ;
      
      \draw[fill] (2,1) circle (3pt)
      +(0,0)   node (w3)  {}  
      +(-0.6,0) node{$w_3$}
      +(0,1) node{
        $\begin{array}{l}
          p=\frac{1}{4},
          q=\frac{3}{4}\\[.5ex]
          \neg \neg p =1
        \end{array}$
      }
      ;
      
      
      \draw[fill] (3,1) node{......}  ; 
      
      \draw[->] (w0) -- (w1) node[midway,above, xshift=-2em,yshift=-2ex] {1};
      \draw[->] (w0) -- (w2) node[midway, above] {1}  ;
      \draw[->] (w0) -- (w3) node[midway, above, xshift=0.5em,yshift=-2ex] {1} ;
    \end{tikzpicture}
  \end{minipage}\hspace{4ex}
  \fcolorbox{black}{gray!10}{
    \begin{minipage}{18em}\small
      \vspace{-2ex}
      \[
      \begin{array}{ll}
        W=\{\,w_j~|~j\geq 0\,\}&\\[1ex]
        R(w_0,w_0)=0&e(w_0,p)=e(w_0,q)=0\\[1ex]
        R(w_0,w_k)=1 \;\; k \geq 1
        &
        e(w_k,p) = \frac{1}{k+1}\;\; k\geq 1\\[1ex]
        &e(w_k,q) = \frac{k}{k+1}\;\; k\geq 1
      \end{array}
      \]
    \end{minipage}
  } 
  \small 
  $\small e(w_0,\Box p) \;=\; \inf_{j\geq 0}\{\, \mbox{\medfont $R(w_0,w_j)\to
  e(w_j,p) \,\} \;=\; \inf\{\; \overbrace{0\to 0}^1,\;\overbrace{1 \to
    (1/2)}^{1/2}, \; \overbrace{1 \to (1/3)}^{1/3}, \; \overbrace{1
    \to (1/4)}^{1/4}, \,\dots \;$}\}\;=\;0 $
  \\[1ex]
  $\small e(w_0,\Diamond q) \;=\; \sup_{j\geq 0}\{\, \mbox{\medfont $R(w_0,w_j)\land
  e(w_j,q) \,\} \;=\; \sup\{\; \overbrace{0\land 0}^0,\;\overbrace{1
    \land (1/2)}^{1/2}, \; \overbrace{1 \land (2/3)}^{2/3}, \;
  \overbrace{1 \land (3/4)}^{3/4}, \,\dots \;$}\}\;=\;1 $
  \\[2ex]
  $e(w_j,\neg \neg p) = 1,\; \forall j\geq 1\quad e(w_0,\Box\neg\neg
  p) =1 \quad e(w_0,\neg\neg \Box p) =0 \quad e(w_0,\ \Box\neg\neg
  p\to \neg\neg \Box p) =0 $
  \caption{A non-witnessed \GM-model $\M=\stru{W,R,e}$.}  
  \label{fig:nonWitn}
\end{figure}

\noindent
Note that $e(w,\neg\a)=1$ if $e(w,\a)=0$ and $e(w,\neg\a)=0$ if
$e(w,\a)>0$.  A world $w'$ is an $R$-successor of $w$ iff $R(w,
w')>0$. If $w$ has no $R$-successors, then $e(w,\Box\a)=1$ and
$e(w,\Diam\a)=0$ since, for every $w'\in W$, $R(w,w') \to e(w',\a) =
1$ and $R(w,w')\land e(w',\a) = 0$.  


A \GM-model $\M=\stru{W,R,e}$ is \emph{witnessed}, and is called a
\emph{\GWM-model}, if the values of $e(w,\Box\a)$ and $e(w,\Diam\a)$
are witnessed by some world $w'$ in $\M$, namely:
\begin{itemize}
\item $e(w,\Box \a)=r$ iff there is $w'\in W$ such that $R(w,w') \to
  e(w',\a) = r$;

\item $e(w,\Diam \a)=r$ iff there is $w'\in W$ such that
  $R(w,w')\land e(w',\a) = r$.
\end{itemize}
As a consequence, in the definitions of $e(w,\Box \a)$ and $e(w,\Diam
\a)$, the infimum and the supremum are just the minimum and the
maximum. A non-witnessed model is shown in Fig.~\ref{fig:nonWitn};
whenever the value $r$ of $R(w,w')$ or $e(w,p)$ is not displayed, it
is understood that $r$ is 0 (e.g., $R(w_i,w_j)=0$, for every $i\geq 1$
and $j\geq 0$).  It holds that $e(w_0,\Box p)=0$, but the witnessing
condition fails since there is no $w_j\in W$ such that $R(w_0,w_j) \to
e(w_j,p)=0$.  Similarly $e(w_0,\Diamond q)=1$, but there is no
witness.  When denoting a class of models, the superscript
$\mathrm{c}$ means ``crisp'' (e.g., $\GWCM$-model stands for ``crisp
$\GWM$-model'').

A formula $\varphi$ is \emph{valid} in $\M=\stru{W,R,e}$ iff
$e(w,\varphi)=1$ for every world $w$ in $W$.  If $\varphi$ is not
valid in $\M$, we say that $\M$ is a \emph{countermodel} for
$\varphi$; thus, $\M$ contains a world $w$ such that $e(w,\varphi) <
1$.  Let $\Qrange=[0,1]\cap\mathbf{Q}$, where $\mathbf{Q}$ is the set
of rational numbers; a model $\M=\stru{W,R,e}$ is \emph{discrete} if
$W$ is finite and the images of $R$ and $e$ are finite subsets of
$\Qrange$. We stress that every finite \GM-model is witnessed.

Let $\GL$ be the G\"odel-Dummett Logic, obtained by extending
Intuitionistic Propositional Logic $\IPL$ with the linearity axiom
$(\a\to\b)\lor (\b\to \a)$.
G\"odel Modal Logic $\GKL$ is obtained by
adding to $\GL$ the following axioms and rules ($\vdash$ refers to
provability in $\GKL$):
\[
\begin{tabular}{m{3em}m{12em}m{3em}m{16em}}
  $(K_\Box)$ & $\Box(\a \to \b)\to (\Box\a\to \Box \b)$ & 
  $(K_\Diam)$ & $\Diam(\a \lor \b)\to (\Diam\a\lor \Diam \b)$
  \qquad $(F_\Diam)\; \neg \Diam\bot$ 
  \\
  $(FS_1)$ & $\Diam(\a \to \b)\to (\Box\a\to \Diam \b)$ & 
  $(FS_2)$ & $(\Diam \a \to\Box \b)\to \Box(\a\to  \b)$
  \\
  $(N_\Box)$ & $\vdash \a$ implies $\vdash \Box\a$ &
  $(N_\Diam)$ & $\vdash \a\to \b$ implies $\vdash \Diam \a\to\Diam\b$
\end{tabular}
\]
In~\cite{CaiRod2015} it is proved that $\GKL$ is the set of formulas
valid in every \GM-model.  We introduce some logics extending $\GKL$,
by providing a semantic characterization
(see Fig.~\ref{fig:diagram}):

\begin{figure}[t]
  \centering
  \begin{minipage}{\linewidth}
  \begin{center}\small
  \begin{tikzpicture}[
    level 1/.style={sibling distance=10em},
    ]
    \node[fill=gray!10] (GWc) {$\GWCL$}
    child{ 
      node[fill=gray!10] (Gc) {$\GCL$}
      child[missing]{}
      child{
        node (GK)[fill=gray!10]  {$\GKL$}
        edge from parent
      } 
      edge from parent
    }
    child{ 
      node[fill=green!25] (GW) {$\GWL$}
      edge from parent
    } 
    ;
    \draw (GW) -- (GK) ;    

    \node[right=1ex of GWc] {\begin{minipage}{16ex}
      \small $(Cr)$, $(Wt)$  valid\end{minipage}};
    \node[right=1ex of GW] {\begin{minipage}{16ex}
      \small $(Cr)$ NOT valid \par$ (Wt)$ valid  \end{minipage}};
    \node[left=-1ex of Gc] {\begin{minipage}{16ex}
      \small   $(Cr)$  valid \par  $(Wt)$ NOT valid\end{minipage}};
  \node[right=1ex of GK] {\begin{minipage}{20ex}
      \small $(Cr)$, $(Wt)$   NOT valid\end{minipage}};

  \end{tikzpicture}
  \begin{minipage}{60ex}
    \begin{tabular}{p{30ex}p{50ex}}
      \begin{itemize}
      \item $(Cr)\quad \Box (\a \lor \b)\to (\Box \a \lor \Diamond \b)$
      \item $(Wt)\quad\Box \neg\neg \a \to \neg\neg\Box \a$
      \end{itemize}
      &
      \begin{itemize}
      \item $\GCL = \GKL + (Cr)$ (see \cite{RodriguezV:21})
      \item $\GKL\subset\GCL\subset\GWCL$,   $\GKL\subset\GWL\subset\GWCL$
      \end{itemize}
    \end{tabular}
  \end{minipage}
\end{center}
\end{minipage}
   \vspace{-2ex}
   \caption{Logics overview.}
  \label{fig:diagram}
\end{figure}

\begin{itemize}
\item $\GCL$ is the set of formulas valid in every  \GCM-model
  (see~\cite{RodriguezV:21});
\item $\GWL$ is the set of formulas valid in every \GWM-model;
\item $\GWCL$ is the set of formulas valid in every 
  \GWCM-model (see~\cite{FerFioRod:2025}).
\end{itemize}

\noindent
The logic $\GCL$ is obtained by adding the Crisp axiom
$\Box(\a\lor\b)\to (\Box \a \lor \Diam \b)$, which we call $(Cr)$, to
$\GKL$ (see~\cite{RodriguezV:21}).  The logic $\GWCL$ was introduced
in~\cite{FerFioRod:2025} and recently an axiomatization has been
proposed in~\cite{RV26}.  By definition,
$\GKL\subseteq\GCL\subseteq\GWCL$ and
$\GKL\subseteq\GWL\subseteq\GWCL$; we show that all the inclusions are
strict.  Let us consider the instance $\varphi =\Box (p\lor q)\to
(\Box p \lor \Diam q)$ of $(Cr)$ and let $\M=\stru{W,R,e}$ be the
$\GWM$-model in Fig.~\ref{fig:countGW}.  Since $e(w_0,\varphi) = 0.6$,
$\M$ is a countermodel for $\varphi$, and this certifies that
$\varphi\not\in\GWL$; accordingly, axiom $(Cr)$ is not valid in
$\GWL$.  Let $(Wt)$ be the formula schema $\Box \neg\neg \a\to
\neg\neg \Box \a$; we show that $(Wt)$ is valid in $\GWL$.  Let
$\M=\stru{W,R,e}$ be a $\GWM$-model and $w\in W$.  If $e(w, \neg\neg
\Box \a)=1$, we immediately get $e(w, (Wt))=1$. Otherwise, $e(w,
\neg\neg \Box \a)=0$, hence $e(w, \Box \a)=0$.  Since $\M$ is
witnessed, there exists $w'\in W$ such that $R(w,w') \to e(w',\a) =
0$; it follows that $e(w',\a)=0$, hence $e(w',\neg \neg \a)=0$.  This
implies $e(w, \Box\neg \neg \a)=0$, hence $e(w, (Wt)) = 1$. This
proves that $(Wt)$ is valid in $\GWL$.  Since every finite model is
witnessed, a countermodel for $(Wt)$ must be infinite.  The instance
$\Box \neg\neg p\to \neg\neg \Box p$ of $(Wt)$ is not valid in the
$\GCM$-model $\M$ in Fig.~\ref{fig:nonWitn} (see,
e.g.,~\cite{FerFioRod:2025}), thus $(Wt)$ is not valid in $\GCL$; this
also ascertains that neither $\GCL$ nor $\GKL$ enjoy the finite model
property.

We show that the (classical) modal logic $\KL$ can be embedded into
each of the logics in Fig.~\ref{fig:diagram}, by exploiting the
translation $*$ introduced in~\cite{MetOli1}.  We stress that a model
for the logic $\KL$ (a \emph{$\KM$-model}) can be viewed as a
$\GWCM$-model whose evaluation function takes values only in the set
$\{0,1\}$; accordingly, $\GWCL\subseteq\KL$.  We assume that $\Diam$
is not a primitive connective in the language of $\KL$ (actually, in
$\KL$ the formula $\Diam\varphi$ can be defined as
$\neg\Box\neg\varphi$).  Let $\a$ be a formula of $\KL$; the formula
$\a^*$ is obtained by replacing each propositional variable $p$
occurring in $\a$ with $\neg\neg p$.  As shown in the following lemma,
the evaluation of a formula $\a^*$ at any world of a $\GM$-model is
binary (0 or 1); this is essentially due to the fact that
propositional variables in $\a^*$ are double negated and $\Diam$ does
not occur in $\a^*$.

\begin{figure}[t]
  \centering
  \[\small
  \begin{array}{c}
    \begin{minipage}{10em}
      \begin{tikzpicture}[scale=0.8]
        \draw[fill] (0,0) circle (3pt)
        +(0,-0)   node (w0)  {}  
        +(0,-0.3) node  {$w_0$}
        ;
        
        \draw[fill]  (0,1.5) circle (3pt)
        +(0,0)   node (w1)  {}  
        +(-0.6,0)  node  {$w_1$} 
        +(1.2,0.2) node  {$p=0.5$}
        +(1.2,-0.2) node  {$q=0.6$}
        ;

        \draw[->] (w0) -- (w1) node[above right=-1.4] {{\small 0.6}} ;
      \end{tikzpicture}
    \end{minipage}
    \begin{minipage}{20em}
      \[\small
      \begin{array}{l}
        \fcolorbox{black}{gray!10}{
          \begin{minipage}{20em}\small
            $W=\{\,w_0,\,\,w_1\,\}$
            \\[.5ex]
            $R(w_0,w_0)=R(w_1,w_1)=0, \;R(w_0,w_1) = 0.6$
            \\[.5ex]
            $e(w_0,p) = e(w_0,q)=0$  
            \\[.5ex]
            $e(w_1,p) = 0.5,\; e(w_1,q)=0.6$
          \end{minipage}
        } 
      \end{array}
      \]
    \end{minipage} 
    \\[10ex]
    \begin{array}{rcl}
      e(w_0,\Box (p\lor q))&=&\inf\{\, R(w_0,w_0)\to e(w_0,p\lor q),\, R(w_0,w_1)\to e(w_1,p\lor q) \,\}\,=\,
      \inf\{\, 0\to 0,\, 0.6\to 0.6\,\}\,=\,1
      \\[1ex]
      e(w_0,\Box p) &=& \inf\{\, R(w_0,w_0)\to e(w_0,p),\,   R(w_0,w_1)\to e(w_1,p)\,\} \,=\,
      
      \inf\{\, 0\to 0,\,  0.6\to 0.5\,\}    \,=\, 0.5
      \\[1ex]
      e(w_0,\Diam q)&=& \sup\{\, R(w_0,w_0)\land e(w_0,q),\, R(w_0,w_1)\land e(w_1,q)\,\}\,=\,
      \sup\{\, 0\land 0,\,  0.6\land 0.6\,\}\,=\,  0.6
      \\[1ex]
      e(w_0,\varphi) &=& e(w_0, \Box (p\lor q))\to e(w_0, \Box p \lor \Diam q) \,=\, 1\to   0.6 \,=\, 0.6
    \end{array}
      
  \end{array}
  \]
  
  \caption{A countermodel for the formula $\varphi=\Box (p\lor q)\to
    (\Box p \lor \Diam q) $ (instance of (Cr)).}
  \label{fig:countGW}
\end{figure}

\begin{lemma}\label{lemma:K}
  Let $\a$ be a formula of  $\KL$. 

\begin{enumerate}[label=(\roman*), ref=(\roman*),leftmargin=*]    
\item\label{lemma:K:1}
If $\a\in\KL$ then $\a^*\in \GKL$.
  
\item\label{lemma:K:2}
If $\a^*\in \GWCL$ then  $\a\in\KL$. 
  
  \end{enumerate}
\end{lemma}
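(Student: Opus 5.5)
The whole argument rests on one auxiliary claim, stated just before the lemma: in every $\GM$-model $\M=\stru{W,R,e}$ and at every world $w$, $e(w,\a^*)\in\{0,1\}$. Moreover, the value is obtained by classical evaluation in the crisp Kripke model $\M^c=\stru{W,R^c,v}$. Here $R^c(w,w')=1$ iff $R(w,w')>0$, and $v(w,p)=1$ iff $e(w,p)>0$. I would prove this by induction on $\a$, where $\a$ is built from variables, $\bot$, $\land,\lor,\to$ and $\Box$ only.

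The base cases are routine. For a variable, $e(w,\neg\neg p)$ is $1$ if $e(w,p)>0$ and $0$ otherwise. For $\bot$ the value is $0$. For the propositional connectives, the G\"odel operations restricted to $\{0,1\}$ coincide with the Boolean ones.

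The only delicate case is $\Box\b$. By the induction hypothesis each $e(w',\b)\in\{0,1\}$. Then $R(w,w')\to e(w',\b)$ equals $1$ if $e(w',\b)=1$ or $R(w,w')=0$. If $R(w,w')>0$ and $e(w',\b)=0$, it equals $0$. So the set of which we take the infimum is a subset of $\{0,1\}$. Its infimum is therefore a minimum, so no witnessing is needed. The infimum is $0$ iff some $R^c$-successor $w'$ has $e(w',\b)=0$, which is exactly the classical clause for $\Box$ in $\M^c$.

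For \ref{lemma:K:2}, argue contrapositively. Suppose $\a\notin\KL$. Then there is a $\KM$-model $\stru{W,R,v}$ and a world $w$ with $\a$ false at $w$. View it, as remarked before the lemma, as a $\GWCM$-model with $e=v\in\{0,1\}$. Note that this model is crisp and witnessed, since all values lie in $\{0,1\}$. Its associated $\M^c$ is the model itself. By the claim, $e(w,\a^*)$ equals the classical value of $\a$ at $w$, which is $0$. Hence $\a^*\notin\GWCL$.

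For \ref{lemma:K:1}, I would avoid a syntactic translation of $\KL$-derivations and instead use the completeness of $\GKL$ with respect to $\GM$-models from~\cite{CaiRod2015}. Let $\a\in\KL$, and let $\M$ be any $\GM$-model and $w$ any world. By the claim, $e(w,\a^*)$ equals the classical truth value of $\a$ at $w$ in the Kripke model $\M^c$. This value is $1$ because $\a$ is valid in all $\KM$-models. Hence $\a^*$ is valid in every $\GM$-model, and so $\a^*\in\GKL$.

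The main point to get right is the $\Box$ step of the auxiliary claim. One must check that the translation of $\Box$ depends on $R$ only through its support $R>0$. This works precisely because $\Diam$ is absent and the subformulas are already binary. A $\Diam$ clause would instead produce values $R(w,w')\land 1=R(w,w')$, which need not be binary.
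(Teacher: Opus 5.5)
Your proposal is correct and follows the paper's proof. Your auxiliary claim is exactly the paper's property (A), $e_1(w,\varphi)=e(w,\varphi^*)$, with your $\M^c$ playing the role of $\M_1$; it is proved by the same induction and the same analysis of the $\Box$ case, and for (i) the paper runs the same argument in contrapositive form via the completeness of $\GKL$ with respect to $\GM$-models. For (ii) the paper simply invokes $\GWCL\subseteq\KL$ and the $\KL$-validity of $\a\tto\a^*$, which is what your specialisation of the claim to $\{0,1\}$-valued models rederives.
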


\begin{proof}
The  proof of point~\ref{lemma:K:2} is immediate, since
  $\GWCL\subseteq \KL$ and the formula  $\a\tto \a^*$  is valid in  $\KL$.
  We prove  point~\ref{lemma:K:1}. Let us assume that $\a^*\not\in \GKL$.
  There exists a $\GM$-model $\M=\stru{W,R,e}$ and  a world $w$ in $W$ such that $e(w,\a^*)< 1$.
Let $\M_1=\stru{W,R_1,e_1}$ be the   $\KM$-model  defined as follows:
\[
  R_1(w,w')\;=\;
  \begin{cases}
  1 &\text{if $R(w,w') > 0$}  
    \\
 0 & \text{otherwise}   
  \end{cases}
\hspace{4em}
  e_1(w,p)\;=\;
  \begin{cases}
  1 &\text{if $e(w,p) > 0$}  
    \\
 0 & \text{otherwise} 
\end{cases}
 \]  
 Let $w\in W$ and let $\varphi$ be any formula  of $\KL$; by induction on the size of $\varphi$, we prove that:
 \begin{enumerate}[label=(\Alph*), ref=(\Alph*),leftmargin=*]    
 \item\label{lemma:K:1:proof:1}
   $e_1(w,\varphi)= e(w,\varphi^*)$.
 \end{enumerate}
 The  case  $\varphi\in\PV\cup\{\bot\}$ is immediate.
  Let $\varphi = \Box \psi$;
note that $\varphi^*=\Box \psi^*$.
 Assume $e_1(w,  \Box \psi)=0$.
 There exists $w'\in W$ such that $R_1(w,w')=1$ and $e_1(w',   \psi)=0$.
 Then,  $R(w,w')>0$ and, by the induction hypothesis,   $e(w',   \psi^*)=0$.
 It follows that $R(w,w')\to e(w',   \psi^*)=0$, hence  $e(w,  \Box \psi^*)=0$.
 Let $e_1(w,  \Box \psi)=1$ and assume, by contradiction,  $e(w,  \Box \psi^*) <  1$.
 By the induction hypothesis,  for every $w' \in W$ it holds that  $e(w',\psi^*)\in\{0,1\}$,
 hence  $R(w,w')\to e(w',\psi^*) \in \{0,1\}$ and  $e(w,  \Box \psi^*) \in \{0, 1\}$.
 Thus,   $e(w,  \Box \psi^*) =  0$ and
 there is  $w'\in W$ such that  $R(w,w')\to e(w',\psi^*) = 0$.
 We get   $R(w,w') > 0$ and  $e(w',\psi^*)=0$,
 hence $R_1(w,w') = 1$ and  $e_1(w',  \psi)=0$,
 contradicting the assumption  $e_1(w,  \Box \psi)=1$.
Since the assumption $e(w,  \Box \psi^*) <  1$ yields a contradiction, 
 we conclude    $e(w,  \Box \psi^*) = 1$.
The proof of~\ref{lemma:K:1:proof:1} in the remaining cases easily follows from the induction hypothesis.
Since  $e(w,\a^*)< 1$, from~\ref{lemma:K:1:proof:1} it follows that
$e_1(w,\a) = 0$ . Accordingly, $\a\not\in\KL$,
and this proves~\ref{lemma:K:1}.
\end{proof}

As a consequence of Lemma~\ref{lemma:K}, we get:

\begin{proposition}\label{prop:embedding}
Let $\a$ be a formula of $\KL$ and let  $L$ be any of the logics $\GKL$, $\GCL$, $\GWL$ and $\GWCL$.   
Then, $\a\in\KL$ iff $\a^*\in L$.
\end{proposition}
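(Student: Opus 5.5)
The plan is to sandwich every logic $L$ in the list between $\GKL$ and $\GWCL$ and then apply the two halves of Lemma~\ref{lemma:K}. First I record the inclusions $\GKL\subseteq L\subseteq\GWCL$ for each $L\in\{\GKL,\GCL,\GWL,\GWCL\}$. These follow directly from the semantic definitions given above. Every $\GWCM$-model is in particular a $\GCM$-model, a $\GWM$-model and a $\GM$-model. Hence validity in every $\GM$-model (respectively every $\GCM$-model, every $\GWM$-model) implies validity in every $\GWCM$-model. This is exactly the chain $\GKL\subseteq\GCL\subseteq\GWCL$ and $\GKL\subseteq\GWL\subseteq\GWCL$ already noted in the text.

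For the left-to-right direction, assume $\a\in\KL$. By Lemma~\ref{lemma:K}\ref{lemma:K:1} we get $\a^*\in\GKL$, and since $\GKL\subseteq L$ it follows that $\a^*\in L$.

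For the right-to-left direction, assume $\a^*\in L$. Since $L\subseteq\GWCL$, we have $\a^*\in\GWCL$, and Lemma~\ref{lemma:K}\ref{lemma:K:2} yields $\a\in\KL$.

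There is no real obstacle here: the whole content lies in Lemma~\ref{lemma:K}. The only point needing care is that the inclusions are the right way round. The smaller the class of models, the larger the logic, and $\GWCM$-models form the smallest class among those considered. I would also recall briefly why point~\ref{lemma:K:2} is legitimate. A $\KM$-model is a $\GWCM$-model with $\{0,1\}$-valued evaluation, which gives $\GWCL\subseteq\KL$, and $\a\tto\a^*$ is valid in $\KL$ because $\neg\neg p\tto p$ holds classically. Both facts were already used in the proof of Lemma~\ref{lemma:K}, so no new argument is needed.
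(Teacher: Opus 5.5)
Your proof is correct and is the argument the paper intends. The paper states this proposition as an immediate consequence of Lemma~\ref{lemma:K}: the inclusions $\GKL\subseteq L\subseteq\GWCL$ carry point~\ref{lemma:K:1} up to $L$ and point~\ref{lemma:K:2} down from $L$, with $\GKL\subseteq L$ resting, as in the paper, on the cited completeness of $\GKL$ for \GM-models.
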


We conclude this section by discussing the non-interdefinability of $\Box$ and $\Diam$.
This was established for  $\GCL$ in~\cite{RodriguezV:21}   using algebraic semantics.
Here, we instead employ  $\GWCM$-models  to extend the result to any modal logic contained in $\GWCL$.
We prove that:

\begin{enumerate}[label=(P\arabic*), ref=(P\arabic*),leftmargin=*]    
\item\label{defBD:P1}
  there is no  $\Box$-free formula $\varphi$ such that
  $\Box p \tto\varphi$ is valid in  $\GWCL$; 
  
\item\label{defBD:P2}
  there is no  $\Diam$-free formula $\psi$ such that
  $\Diam p \tto\psi$ is valid in  $\GWCL$. 
  
\end{enumerate}


\begin{lemma}\label{lemma:defBD}
  Let $\M^\ast=\stru{W,R,e}$ be the  $\GWCM$-model defined in Fig.~\ref{fig:ind}.
  \begin{enumerate}[label=(\roman*), ref=(\roman*),leftmargin=*]    
  \item\label{lemma:defBD:1} For every $\Box$-free formula $\varphi$,
    $e(w_0,\varphi)\in\{0,\,0.5,\,1\}$.
    
  \item\label{lemma:defBD:2} For every $\Diam$-free formula $\psi$,
    $e(w_0,\psi)\in\{0,\,0.4,\,1\}$.
  \end{enumerate}
\end{lemma}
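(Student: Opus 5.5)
The plan is to prove a stronger statement by structural induction, covering every world of $\M^\ast$ at once and not only $w_0$. For each world $w$ of $\M^\ast$ I will fix a finite set $S_w\subseteq[0,1]$ of \emph{admissible values} for $\Box$-free formulas, with $S_{w_0}=\{0,0.5,1\}$, and similarly a set $T_w$ for $\Diam$-free formulas, with $T_{w_0}=\{0,0.4,1\}$. The claim is then that $e(w,\varphi)\in S_w$ for every $\Box$-free $\varphi$, and $e(w,\psi)\in T_w$ for every $\Diam$-free $\psi$. Points~\ref{lemma:defBD:1} and~\ref{lemma:defBD:2} are the instances $w=w_0$.

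The propositional part rests on one algebraic observation: any subset $X\subseteq[0,1]$ with $0,1\in X$ is closed under the G\"odel operations. Indeed, $\min$ and $\max$ return one of their arguments, and $a\to b$ is either $1$ or $b$. So as long as each $S_w$ and $T_w$ contains $0$ and $1$, the cases $\bot$, $\land$, $\lor$, $\to$ are immediate from the induction hypothesis. The base case for a variable $p$ only requires choosing $S_w$ and $T_w$ to contain $e(w,p)$ (and $0,1$). This is read off Fig.~\ref{fig:ind}.

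The modal cases use crispness and finiteness of $\M^\ast$. Writing $R(w)$ for the set of $R$-successors of $w$, we have $e(w,\Diam\a)=\max_{w'\in R(w)} e(w',\a)$, and $e(w,\Box\a)=\min_{w'\in R(w)} e(w',\a)$, with the conventions $0$ and $1$ when $R(w)=\emptyset$. Hence it suffices that $\max$ of any choice of elements from $S_{w'}$, one for each $w'\in R(w)$, lies in $S_w$, and dually $\min$ of elements from the $T_{w'}$ lies in $T_w$.

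I expect the main obstacle to be choosing the sets $S_w,T_w$ for the worlds other than $w_0$ so that this closure condition holds everywhere, in particular when $R$ has cycles or reflexive points. The natural attempt is to take $S_w$ to be the least family satisfying the base and closure conditions, i.e.\ a finite fixpoint computation over the finitely many worlds and values. For the intended model, whose successors of $w_0$ carry values such as $0.5$ and $0.4$ on $p$ and (I expect) have no further successors or only trivial ones, I will check by hand that the least family satisfies $S_{w_0}=\{0,0.5,1\}$. The underlying reason is that a maximum over successors can never produce $0.4$ once $0.5$, or $1$, is available among the successor values reachable by $\Box$-free formulas. Dually, a minimum can never produce $0.5$ once $0.4$ or $0$ is present, which gives $T_{w_0}=\{0,0.4,1\}$. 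Once these sets are verified on the concrete model, the induction goes through uniformly.
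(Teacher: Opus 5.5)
\textbf{There is a genuine gap, at exactly the step you call a routine hand check.} Your modal closure condition lets the successor values be chosen \emph{independently}: the maximum of any $a_1\in S_{w_1}$ and any $a_2\in S_{w_2}$ must lie in $S_{w_0}$.

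In $\M^\ast$ the worlds $w_1$ and $w_2$ have no successors. Since $e(w_1,p)=0.4$ and $e(w_2,p)=0.5$, you get $S_{w_1}\supseteq\{0,0.4,1\}$ and $S_{w_2}\supseteq\{0,0.5,1\}$. Choosing $a_1=0.4$ and $a_2=0$ then forces $0.4=\max(0.4,0)\in S_{w_0}$. Dually, choosing $1\in T_{w_1}$ and $0.5\in T_{w_2}$ forces $0.5=\min(1,0.5)\in T_{w_0}$.

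So the least family you propose to compute has $S_{w_0}=T_{w_0}=\{0,0.4,0.5,1\}$. No family satisfying your closure conditions can yield the bounds of the lemma. Your informal justification (``a maximum can never produce $0.4$ once $0.5$ is available'') is really a claim about a single formula $\alpha$: that $e(w_1,\alpha)=0.4$ forces $e(w_2,\alpha)=0.5$. A per-world set of values cannot record that information.

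\textbf{The missing ingredient is the formula-by-formula correlation between $w_1$ and $w_2$.} This is what the paper's proof uses. For \emph{every} formula $\alpha$,
\[
e(w_1,\alpha)\in\{0,0.4,1\}\quad\text{and}\quad e(w_2,\alpha)=\Phi(e(w_1,\alpha)),
\]
where $\Phi(0)=0$, $\Phi(0.4)=0.5$ and $\Phi(1)=1$. This follows by induction on $\alpha$:
\begin{itemize}
\item The base cases hold because $e(w_1,p)=0.4$, $e(w_2,p)=0.5$, and all other variables are $0$.
\item Modal formulas are constantly $1$ ($\Box$) or $0$ ($\Diam$) at both dead-end worlds.
\item $\Phi$ commutes with $\min$, $\max$ and $\to$, because it is an order isomorphism fixing $0$ and $1$.
\end{itemize}
Hence the pair $(e(w_1,\alpha),e(w_2,\alpha))$ is one of $(0,0)$, $(0.4,0.5)$, $(1,1)$. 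It follows that $e(w_0,\Diam\alpha)$, the larger entry of the pair, lies in $\{0,0.5,1\}$. Likewise $e(w_0,\Box\alpha)$, the smaller entry, lies in $\{0,0.4,1\}$. The atomic and propositional cases at $w_0$ are then handled exactly by your correct observation that any subset of $[0,1]$ containing $0$ and $1$ is closed under the G\"odel operations.

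Equivalently, you can rescue your fixpoint idea by tracking the \emph{vector} $(e(w_0,\alpha),e(w_1,\alpha),e(w_2,\alpha))$ instead of separate per-world sets. The invariant is that this vector has the form $(a,b,\Phi(b))$, with $b\in\{0,0.4,1\}$ and $a$ in the target set at $w_0$.
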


\begin{proof}
  One can easily prove that, for every formula $\a$, the following
  facts hold:
  
  \begin{enumerate}[label=(\arabic*), ref=(\arabic*),leftmargin=*]    
  \item\label{lemma:defBD:F1}
    $e(w_1,\a)\in\{0,\,0.4,\,1\}$.
    
  \item\label{lemma:defBD:F2}
    $e(w_2,\a)=\Phi(e(w_1,\a))$,
    where $\Phi(0)=0$,  $\Phi(0.4)=0.5$, $\Phi(1)=1$.
  \end{enumerate}
  We prove~\ref{lemma:defBD:1}, by induction on the size of
  $\varphi$.  If $\varphi\in\PV\cup\{\bot\}$, we have $e(w_0,\varphi)
  = 0$.  The cases $\varphi=\a\land \b$, $\varphi=\a\lor \b$ and
  $\varphi=\a\to \b$ easily follow from the induction hypothesis.  Let
  $\varphi = \Diam \a$.  By~\ref{lemma:defBD:F1}
  and~\ref{lemma:defBD:F2}, one of the following
  properties~\ref{lemma:defBD:A}--\ref{lemma:defBD:C} holds:

  \begin{enumerate}[label=(\alph*), ref=(\alph*),leftmargin=*]    
  \item\label{lemma:defBD:A} $e(w_1,\a) = e(w_2,\a) = 0$.
    
  \item\label{lemma:defBD:B} $e(w_1,\a) = 0.4$ and $e(w_2,\a) = 0.5$.

  \item\label{lemma:defBD:C} $e(w_1,\a) = e(w_2,\a) = 1$.
  \end{enumerate}
  In case~\ref{lemma:defBD:A} we get $e(w_0,\Diam \a)=0$, in
  case~\ref{lemma:defBD:B} $e(w_0,\Diam \a)=0.5$, in
  case~\ref{lemma:defBD:C} $e(w_0,\Diam \a)=1$; this concludes the
  proof of~\ref{lemma:defBD:1}.  The proof of~\ref{lemma:defBD:2} is
  similar.
\end{proof}

\begin{proposition}\label{prop:gwclBD}
  Properties~\ref{defBD:P1} and~\ref{defBD:P2} hold.
\end{proposition}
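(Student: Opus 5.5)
We argue by contradiction, using the model $\M^\ast=\stru{W,R,e}$ of Fig.~\ref{fig:ind} together with Lemma~\ref{lemma:defBD}. Since $\M^\ast$ is a $\GWCM$-model, every formula valid in $\GWCL$ evaluates to $1$ at every world of $\M^\ast$, in particular at $w_0$. We also use the elementary fact that $e(w,\a\tto\b)=1$ iff $e(w,\a)=e(w,\b)$, which holds because $e(w,\a\to\b)=1$ iff $e(w,\a)\leq e(w,\b)$.

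For~\ref{defBD:P1}, assume there is a $\Box$-free $\varphi$ with $\Box p\tto\varphi$ valid in $\GWCL$. Then $e(w_0,\Box p)=e(w_0,\varphi)$. By Lemma~\ref{lemma:defBD}\ref{lemma:defBD:1}, $e(w_0,\varphi)\in\{0,0.5,1\}$. We then compute $e(w_0,\Box p)$ directly in $\M^\ast$ and check that it lies outside $\{0,0.5,1\}$; presumably it equals $0.4$. In a crisp model, $e(w_0,\Box p)$ is the minimum of $e(w',p)$ over the successors $w'$ of $w_0$. By fact~\ref{lemma:defBD:F2} in the proof of Lemma~\ref{lemma:defBD}, $e(w_1,p)=0.4$ forces $e(w_2,p)=0.5$, so the minimum is $0.4$. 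This gives the contradiction.

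For~\ref{defBD:P2}, the argument is symmetric. A $\Diam$-free $\psi$ with $\Diam p\tto\psi$ valid would give $e(w_0,\Diam p)=e(w_0,\psi)\in\{0,0.4,1\}$ by Lemma~\ref{lemma:defBD}\ref{lemma:defBD:2}. However, $e(w_0,\Diam p)=\max(e(w_1,p),e(w_2,p))=0.5$, again a contradiction.

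The main obstacle is reading off the values of $p$ at $w_1,w_2$ and the successors of $w_0$ from Fig.~\ref{fig:ind}, which is not shown here. The plan needs $w_0$ to see exactly $w_1$ and $w_2$, with $e(w_1,p)=0.4$ and $e(w_2,p)=0.5$. This is strongly suggested by the cases~\ref{lemma:defBD:A}--\ref{lemma:defBD:C}, where the $\Diam$-values at $w_0$ are computed as the maximum over $w_1,w_2$. If the figure assigns different values to $p$, the same strategy applies provided $e(w_0,\Box p)\notin\{0,0.5,1\}$ and $e(w_0,\Diam p)\notin\{0,0.4,1\}$.
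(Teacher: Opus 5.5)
Your proposal is correct and follows the paper's proof. In $\M^\ast$, $w_0$ sees exactly $w_1,w_2$ with $e(w_1,p)=0.4$ and $e(w_2,p)=0.5$, as you guessed. Validity of $\Box p\tto\varphi$ then forces $e(w_0,\varphi)=e(w_0,\Box p)=0.4$, contradicting Lemma~\ref{lemma:defBD}\ref{lemma:defBD:1}, and symmetrically $e(w_0,\Diam p)=0.5$ contradicts Lemma~\ref{lemma:defBD}\ref{lemma:defBD:2}; the appeal to fact~\ref{lemma:defBD:F2} is unnecessary, since Fig.~\ref{fig:ind} gives $e(w_2,p)=0.5$ directly.
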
  

\begin{proof}
  Let us assume, by contradiction, that property~\ref{defBD:P1} does
  not hold.  Then, there exists a $\Box$-free formula $\varphi$ such
  that $\Box p \tto\varphi$ is valid in $\GWCL$. As a consequence, in
  the \GWCM-model $\M^\ast$ in Fig.~\ref{fig:ind} we have $e(w_0, \Box
  p \tto\varphi)=1$. Since $e(w_0, \Box p)= 0.4$, we get $e(w_0,
  \varphi)= 0.4$, in contradiction with
  Lemma~\ref{lemma:defBD}\ref{lemma:defBD:1}.  This proves
  that~\ref{defBD:P1} holds. The proof of~\ref{defBD:P2} is similar.
\end{proof}

\begin{proposition}\label{prop:defBD}
  $\Box$ and $\Diam$ are not interdefinable in any modal logic $L$
  contained in $\GWCL$.
\end{proposition}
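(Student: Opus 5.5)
The plan is to reduce Proposition~\ref{prop:defBD} to Proposition~\ref{prop:gwclBD}, using the fact that any logic $L\subseteq\GWCL$ validates fewer equivalences than $\GWCL$. First I will fix what interdefinability means. $\Box$ is definable from $\Diam$ in $L$ if there is a $\Box$-free formula $\varphi$ with $\Box p\tto\varphi\in L$. This suffices, since a uniform definition $\Box\a\tto\varphi(\a)$ for all $\a$ in particular gives the instance with $\a=p$. Symmetrically, $\Diam$ is definable from $\Box$ in $L$ if there is a $\Diam$-free formula $\psi$ with $\Diam p\tto\psi\in L$. Interdefinability means that at least one of these holds, so it is enough to show that neither holds.

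Suppose, towards a contradiction, that $\Box$ is definable in $L$ via a $\Box$-free $\varphi$. Then $\Box p\tto\varphi\in L\subseteq\GWCL$, so $\Box p\tto\varphi$ is valid in $\GWCL$. This contradicts property~\ref{defBD:P1}, which holds by Proposition~\ref{prop:gwclBD}. Likewise, a $\Diam$-free $\psi$ with $\Diam p\tto\psi\in L$ would contradict~\ref{defBD:P2}. Hence neither operator is definable from the other in $L$.

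The only delicate point is the definitional one: making sure that the notion of interdefinability used here reduces to the single-variable instances in~\ref{defBD:P1} and~\ref{defBD:P2}. If a definition is allowed to contain additional parameters (other propositional variables), I would use the substitution instance sending those variables to $p$ or to $\bot$. Alternatively, I would observe that the proof of Lemma~\ref{lemma:defBD} already covers arbitrary $\Box$-free (respectively $\Diam$-free) formulas in any variables. In either case the argument goes through unchanged. The substantive work has already been done in Lemma~\ref{lemma:defBD}, so I expect no real obstacle.
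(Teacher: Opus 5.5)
Your proposal is correct and matches the paper's proof. The paper likewise observes that a $\Box$-free $\varphi$ with $\Box p\tto\varphi\in L\subseteq\GWCL$ would contradict Proposition~\ref{prop:gwclBD}, and handles the $\Diam$ case analogously; your remark about extra variables is unnecessary, since Lemma~\ref{lemma:defBD} already covers $\Box$-free (resp.\ $\Diam$-free) formulas in any variables.
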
  

\begin{proof}
  Let $L\subseteq \GWCL$ and assume
  that  there is
  a $\Box$-free formula $\varphi$ such that $\Box p\tto\varphi\in
  L$. Then, $\Box p\tto\varphi\in \GWCL$,
  in contradiction with Lemma~\ref{prop:gwclBD}.  The proof for
  $\Diam$ is analogous.
\end{proof}

\begin{figure}[t]
  \centering
  \begin{minipage}{0.45\linewidth}
    \begin{tikzpicture}[scale=0.7]
      \draw[fill] (0,-0.5) circle (3pt)
      +(0,0)   node (w0)  {}  
      +(0,-0.3) node  {$w_0$}
      ;
            
      \draw[fill] (-2,1) circle (3pt)
      +(0,0)   node (w1)  {}  
      +(-0.6,0) node{$w_1$}
      +(0,0.5) node{$p=0.4$}
      ;
      
      \draw[fill] (2,1) circle (3pt)
      +(0,0)   node (w2)  {}  
      +(-0.6,0) node{$w_2$}
      +(0,0.5) node{$p=0.5$}
      ;
            
      \draw[->] (w0) -- (w1);
      \draw[->] (w0) -- (w2);
    \end{tikzpicture}
  \end{minipage}
  \begin{minipage}{0.5\linewidth}
    \[\small
    \begin{array}{l}
      \fcolorbox{black}{gray!10}{
        \begin{minipage}{16em}\small
          $W=\{\,w_0,\,w_1,\,w_2\,\}$
          \\
          $R =\{\,(w_0,w_1),\,(w_0,w_2)\,\}$
          \\
          $e(w_1,p) = 0.4,\,e(w_2,p) = 0.5$  
        \end{minipage}
      } 
    \end{array}
    \]
  \end{minipage}
  \caption{The \GWCM-model  $\M^\ast=\stru{W,R,e}$.}
  \label{fig:ind}
\end{figure}

\section{The Calculus $\Calcw$}
\label{sec_calculus}

\begin{figure}[t]
  \[
  \centering
  \begin{array}{c}
    \mbox{$\lt\in\{\,<,\,\leq\,\}$,\hspace{1em} $\gt \in \{\,>,\,\geq\,\}$}\qquad\qquad
    \AXC{}
    \RightLabel{\small$\ruleAxat$}
    \UIC{$\G$}    
    \DP
    \quad\mbox{if $\Atmp{\G}$ is not consistent}
    \\[3ex]
    \AXC{$w: \a \gt t,\, w: \b\gt t,\,  \G$}    
    \RightLabel{$\land\gt$}
    \UIC{$w: \a\land \b \gt t,\,\G$}
    \DP
    \hspace{4em}
    \AXC{$w: \a \lt t,\, \G$}    
    \AXC{$w: \b \lt t,\, \G$}
    \RightLabel{$\land\lt$}
    \BIC{$w: \a\land \b \lt t,\,\G$}
    \DP
    \\[4ex]    
    \AXC{$w: \a \lt t,\, w: \b \lt t,\,  \G$}    
    \RightLabel{$\lor\lt$}
    \UIC{$w: \a\lor \b \lt t,\,\G$}
    \DP
    \hspace{4em}
    \AXC{$w: \a \gt t,\,  \G$}
    \AXC{$w: \b \gt t,\,  \G$}
    \RightLabel{$\lor\gt$}
    \BIC{$w: \a\lor \b \gt t,\,\G$}
    \DP
    \\[4ex]
    \AXC{$w :\a >w : p,\, w : p  < t,\,  \G$}    
    \RightLabel{$\ruleToLTAt\;(\dag)$}
    \UIC{$w: \a\to p < t,\,\G$}
    \DP
    \hspace{1em}
    \AXC{$w :\a > w : q,\,  w :\b \leq w :q,\,  w : q  < t,\,  \G$}    
    \RightLabel{$\ruleToLT\;(\dag)$}
    \UIC{$w: \a\to \b < t,\,\G$}
    \DP
    \\[4ex]
    \AXC{$t \geq 1,\, \G$}    
    \AXC{$w :\a >w : p,\,w : p \leq  t,\,\G$}    
    \RightLabel{$\ruleToLEQAt\,(\dag)$}
    \BIC{$w: \a\to p \leq t,\,\G$}
    \DP
\\[4ex]    
    \AXC{$t \geq 1,\, \G$}    
    \AXC{$  w :\a > w : q,\,    w :\b \leq w : q,\,   w : q \leq  t,\,\G$}    
    \RightLabel{$\ruleToLEQ\,(\dag)$}
    \BIC{$w: \a\to \b \leq t,\,\G$}
    \DP
    \\[4ex]   
    \AXC{$w :\a \leq w : p,\,1\gt t,\, \G$}    
    \AXC{$w:p \gt t,\, \G$}
    \RightLabel{$\ruleToGAt\,(\dag)$}
    \BIC{$w: \a\to p \gt t,\,\G$}
    \DP
    \\[4ex]
    \AXC{$w :\a\ \leq w : q,\, w : \b \geq w :q ,\, 1\gt t,\, \G$}    
    \AXC{$w:\b \gt t,\, \G$}
    \RightLabel{$\ruleToG\,(\dag)$}
    \BIC{$w: \a\to \b \gt t,\,\G$}
    \DP
    \\[4ex]
    \begin{minipage}{1.0\linewidth}
      \[
      \begin{array}{l}
        \Atmp{\G} \;=\;
        \Atm{\G}\;\cup\;
        \{\, 1 > t~|~ w : \Box \a > t \in \G\,\}\;\cup\;
        \{\, 0 < t~|~ w : \Diam \a < t \in \G\,\}
        \\[1ex]   
        (\dag)\quad p\in\PV\cup\{\bot\},\;  \b\not\in\PV\cup\{\bot\},\;\mbox{$q$ is a new propositional variable}
        \end{array}
       \]
     \end{minipage}   
  \end{array}
  \]
  \vspace{-2ex}
 \caption{The calculus $\Calcw$, propositional rules ($\Gamma$ is a multiset of constraints).}
  \label{fig:calc1}
\end{figure}

\begin{figure}[t]
  \[
  \centering
  \begin{array}{c}
    \AXC{$R(w,w_1)\to w_1:\a \lt t,\;\Phibd{\G,w,w_1},\;\G$}
    \RightLabel{$\Box\lt$}
    \UIC{$w:\Box\a \lt t,\, \G$}    
    \DP
    \hspace{3em}
    \AXC{$R(w,w_1)\land w_1:\a \gt t,\;\Phibd{\G,w,w_1},\;  \G$}
    \RightLabel{$\Diam\gt$}
    \UIC{$w:\Diam\a \gt t,\,\G$}    
    \DP
    \\[2ex]
    \AXC{$R(w,w') \gt t,\, w': \a\gt t,\,  \G$}    
    \RightLabel{$R\land\gt$}
    \UIC{$R(w,w') \land w': \a \gt t,\,\G$}
    \DP
    \hspace{2em}
    \AXC{$R(w,w') \lt t,\, \G$}    
    \AXC{$w': \a \lt t,\, \G$}
    \RightLabel{$R\land\lt$}
    \BIC{$R(w,w') \land w':\a \lt t,\,\G$}
    \DP
    \\[4ex]
    \AXC{$w':\a<R(w,w'),\,  w':\a  < t,\,  \G$}    
    \RightLabel{$R\to <$}
    \UIC{$R(w,w') \to w':\a < t,\,\G$}
    \DP
    \hspace{2em}
    \AXC{$t \geq 1,\, \G$}    
    \AXC{$w':\a<R(w,w'),\,  w':\a \leq  t,\,\G$}    
    \RightLabel{$R\to\leq$}
    \BIC{$R(w,w')\to w':\a \leq t,\,\G$}
    \DP
    \\[4ex]    
    \AXC{$w' :\a \geq R(w,w'),\,1\gt t,\, \G$}    
    \AXC{$w':\a \gt t,\, \G$}
    \RightLabel{$R\to\gt$}
    \BIC{$R(w,w')\to w':\a \gt t,\,\G$}
    \DP
    \\[2ex]
     \begin{minipage}{1.0\linewidth}
       \[
       \begin{array}{l}
         \Phibd{\G,w, w_1}\,=\,
         \{\, R(w,w_1)\to w_1: \b \gt\!' t~|~w :\Box \b\gt\!' t \in\G  \,\}\,\cup\,
         \{\, R(w,w_1)\land w_1: \b \lt\!' t~|~w :\Diam \b\lt\!' t \in\G  \,\}\\[1ex]
         \mbox{$w_1$ is a new label}, \lt,\lt'\in\{<,\leq\}~\text{and}~\gt,\gt'\in\{>,\geq\}.
       \end{array}
       \]
     \end{minipage}
  \end{array}
  \]
  \vspace{-2ex}
  \caption{The calculus $\Calcw$: modal rules and $R$-rules ($\Gamma$ is a multiset of constraints).}
  \label{fig:calc2}
\end{figure}

Following~\cite{BilkovaFK:22,FerFioRod:2025}, we introduce the
refutation calculus $\Calcw$ for $\GWL$, namely, a calculus to certify
that a formula is not valid in any $\GWM$-model.  The calculus
exploits the constraint language $\LC$ defined over a countable set
of labels, used to represent the worlds of a \GWM-model.  In the
 definitions  below $w$, $w'$ are labels, $\varphi$  is a
formula, $p\in \PV$, $R$ is a designated relation symbol, representing
the accessibility relation.
\[
\begin{array}{rcl}
  \mbox{labelled formula} & \;\coloneqq&\; w :\varphi                            
  \\
  \mbox{atomic c-term $t$}  &\; \coloneqq&\; 0~|~1~|~w :p~|~w :\bot~|~R(w,w')
  \\[.5ex]
  \mbox{c-term $u$}  & \;\coloneqq&\; t~|~w : \varphi~|~R(w,w')\land w':\varphi~|~R(w,w')\to w':\varphi
  \\[.5ex]
  \mbox{constraint $\chi$}   &\; \coloneqq&\; u \abstractorder t \qquad \abstractorder\in \{\,<,\,\leq,\,>,\,\geq\,\}
\end{array}
\]
If  $w:\varphi$ occurs in   a c-term $u$, we say  that $u$ has label $w$.
A constraint of the form $t \abstractorder t$ is called \emph{atomic};
if $\chi= w:\varphi \abstractorder t$ is non-atomic and $\sharp$ is
the main connective of $\varphi$, we say that $\chi$ is a
\emph{$\sharp$-constraint}.
Constraints where $u$ is   $R(w,w')\land w':\varphi$ or
$R(w,w')\to w':\varphi$ are called
\emph{$R$-constraints}.
Given a multiset of constraints $\G$, by
$\Atm{\G}$ we denote the set of atomic constraints in $\G$.
Let $\M=\stru{W,R,e}$ be a \GWM-model. An \emph{$\M$-interpretation} of
$\LC$ is a function $\Ical$ mapping labels of $\LC$ to $W$.
We extend $\Ical$ to c-terms as follows:
\[
  \begin{array}{l}
    \Ical(k) = k,\;k\in\{0,1\}
    \qquad
    \Ical( w:\varphi) = e(\Ical(w), \varphi)
    \qquad
    \Ical\left(R(w,w')\right) =  R\left(\Ical(w),\Ical(w')\right)
    \\[1ex]
    \Ical\left(R(w,w') \star w':\varphi\right) =
   R\left(\Ical(w),\Ical(w')\right) \,\star\, e\left(\Ical(w'),\varphi\right),\;\star\in\{\land, \to\}
  \end{array}
\]  
Note that, for every c-term $u$, $\Ical(u)$ belongs to $[0,1]$.
 We introduce the relations
$\models_\Ical$ and $\models$, where $\M$ is a \GWM-model, $\Ical$ an
$\M$-interpretation, $\Gamma$ a multiset of constraints.
\begin{itemize}
\item $\M\models_\Ical u \abstractorder t$ iff   $\Ical(u)  \abstractorder \Ical(t)$;
  
\item $\M\models_\Ical \Gamma$ iff $\M\models_\Ical\chi$, for every
  $\chi\in\Gamma$;

\item $\M\models \Gamma$ iff $\M\models_\Ical \Gamma$ for some
  $\M$-interpretation $\Ical$.
\end{itemize}
A \emph{substitution} $\s$ is a function mapping each atomic c-term of
the form $w : p$ or $R(w,w')$ to a rational number in $\Qrange$; $\s$
is extended to all the atomic c-terms by setting $\s(k)=k$, for
$k\in\{0,1\}$, and $\s(w:\bot)=0$.  Let $\Gat$ be a set of atomic
constraints.  By $\s(\Gat)$ we denote the set of constraints obtained
by replacing every atomic c-term $t$ occurring in $\Gat$ with $\s(t)$.
Note that $\s(\Gat)$ is a set of rational constraints of the form $r_1
\abstractorder r_2$, with $r_1$ and $r_2$ in $\Qrange$; if all the
constraints in $\s(\Gat)$ hold, $\s$ is a \emph{solution} to $\Gat$.
The set $\Gat$ is \emph{consistent} iff it admits at least one
solution.  We remark that consistency of $\Gat$ can be checked by a
Constraint Solver over $Q$: one has to abstract the c-terms $w : p$
and $R(w,w')$ occurring in $\Gat$ by introducing new variables ranging
over $\Qrange$, and then check the consistency of the obtained
constraints by exploiting the solver.

Hereafter, in constraints $\lt$ and $\lt'$ are metasymbols ranging over
$\{<,\leq\}$; similarly $\gt,\gt' \in \{>,\geq\}$. The rules of the calculus
$\Calcw$ are displayed in Fig.~\ref{fig:calc1} (propositional rules)
and Fig.~\ref{fig:calc2} (modal rules and $R$-rules).
The \emph{main constraint} of a rule application is the constraint
displayed in the conclusion.  Rules for implication, having a main
constraint of the kind $w:\a\to \b\abstractorder t$, are defined
according to the structure of $\b$.  Let us consider the rule
$\ruleToLTAt$, having main constraint $w:\a\to p < t$
($p\in\PV\cup\{\bot\}$). The premise contains the constraints $w:\a >
w:p$ and $w:p < t$.  This reflects the fact that, given a $\GWM$-model
$\M$, a world $w$ in $\M$ and an $\M$-interpretation $\Ical$, if
$e(\Ical(w), \a\to p) < \Ical(t)$, it holds that $e(\Ical(w),\a) >
e(\Ical(w),p)$, $e(\Ical(w), \a\to p) = e(\Ical(w),p)$ and
$e(\Ical(w),p) < \Ical(t)$.  This rule cannot be generalized to any
$w:\a\to \b < t$ since the constraint $w:\a > w:\b$ is not allowed if
$\b\not\in\PV\cup\{\bot\}$.  This case is covered by rule $\ruleToLT$,
which introduces a new propositional variable $q$ behaving in $w$ as
$\b$, and $w:\a > w:\b$ is replaced by $w:\a > w: q$.  The
correspondence between $\b$ and $q$ is set by the constraint $w:\b\leq
w:q$, while the converse constraint $w:\b\geq w:q$ can be omitted.
There are two modal rules (see Fig.~\ref{fig:calc2}), the former
having main constraint $w:\Box\a \lt t$ and the latter $w:\Diam\a \gt
t$; both rules introduce a new label $w_1$.  $R$-rules (see
Fig.~\ref{fig:calc2}) are similar to the corresponding propositional
rules.  The definitions of \emph{tree} and \emph{derivation} of
$\Calcw$ are the usual ones (see, e.g.,~\cite{TroSch:00}).  We remark
that $\Calcw$ has the \emph{subformula property}, namely: if $\Tcal$
is a tree of $\Calcw$ having $\G_0$ as root, every formula occurring
in $\Tcal$ inside a constraint (e.g., the formula $\varphi$ in the
constraint $R(w,w')\to w':\varphi$) is a subformula of a formula in
$\G_0$ or a new propositional variable.  By $\provesw{\Gamma}$ we mean
that there exists a derivation of $\Gamma$.  In the rest of this
section we show that $\Calcw$ is a sound and complete refutation
calculus for $\GWL$. Formally, soundness and completeness are defined
as follows, where $\G$ is a multiset of constraints of the form
$w:\varphi\abstractorder t$:

\begin{itemize}
\item (Soundness) if $\provesw{\Gamma}$, then $\M\not\models\G$, for
  every \GWM-model $\M$;
\item (Completeness) if $\M\not\models\G$, for every \GWM-model  $\M$, then $\provesw{\Gamma}$.
\end{itemize}
We stress that, if   rule $\ruleAxat$   checked the
consistency of $\Atm{\G}$ instead of  $\Atmp{\G}$, the calculus $\Calcw$ would  not be complete.
Indeed, let  $\G=\{w : \Box p > w:p,\,w:p\leq 1,\,w:p\geq 1\}$.
Clearly, there is no  \GWM-model   $\M$  such that
$\M\models \G$ hence,  by completeness,  $\G$ must  be  provable in $\Calcw$.
To build a derivation of $\G$,  one can only exploit  rule $\ruleAxat$.
Note that $\Atm{\G}=\{w:p\leq 1,\,w:p\geq 1 \}$
is  consistent (any substitution $\s$ such that $\s(w:p)=1$ is a solution);
thus, if $\ruleAxat$  checked the consistency of  $\Atm{\G}$,
   $\G$ could not be proved.
Instead, $\G$ is proved  since $\ruleAxat$   evaluates
the set $\Atmp{\G} =\Atm{\G}\cup \{1 > w:p\}$, which  is not consistent.

An example of derivation is shown in Fig.~\ref{fig:der};
 the main constraint of a rule  application is underlined. 

\begin{figure}[t]
  \centering\small
  \begin{tabular}{p{\linewidth}}
    \[
    \begin{array}{ll}
      c=w_0:q_0\quad\Delta_0=\{\, w_0: \Box p > c,\; c \geq 1,\; c < 1\,\}&
      \Delta_1=\{\,  w_0:\bot \leq c,\;c < 1\,\}\\[.5ex]
      \Delta_2= \Delta_1\cup\{\, w_0: \Box p > c,\;  R(w_{0} ,w_{1}) > w_0:\bot\,\}&
      \Delta_3= \Delta_2\cup\{\,\;w_1: p\leq w_1:\bot,\; 1 > w_0:\bot   \,\}\\[.5ex]
      \multicolumn{2}{l}{\Delta_4= \Delta_2\cup\{\, R(w_{0} ,w_{1})\to w_1:p > c,\;w_1:\bot > w_0:\bot \,\}}\\
    \end{array}
    \]
    \\
    \infer[\to<]{
      \underline{w_0: \Box p \to  \neg  \Diamond  \neg p  < 1}}{
      \infer[\ruleToLEQAt]{
        w_0: \Box p > c,\;
        \underline{w_0: \neg  \Diamond  \neg p \leq c},\;   c < 1}{
        \infer[\textrm{Ax}]{
          \Delta_0}{
        }
        &
        \infer[\Diamond\gt]{
          w_0: \Box p > c,\;\underline{w_0: \Diamond  \neg p > w_0:\bot},\;\Delta_1}{
          \infer[R\land \gt]{
            w_0: \Box p > c,\; R(w_{0} ,w_{1})\to w_1:p > c,\; 
            \underline{R(w_{0} ,w_{1})\land w_1: \neg p > w_0:\bot},\;\Delta_1}{
            \infer[\ruleToGAt]{
              R(w_{0} ,w_{1})\to w_1:p > c,\;
              \underline{w_1: \neg p > w_0:\bot},\; \Delta_2}{
              \infer[R\to \gt]{
                \underline{R(w_{0} ,w_{1})\to w_1:p > c},\;\Delta_3}{
                \infer[\textrm{Ax}]{
                  w_1:p \geq R(w_{0} ,w_{1}),\;  1 > c,\;\D_3}{
                }
                &
                \infer[\textrm{Ax}]{w_1:p > c,\;\D_3}{
                }
              }
              &
              \infer[\textrm{Ax}]{\Delta_4}{}
            }
          }
        }
      }
    }
  \end{tabular}
  \vspace{-3ex}
  \caption{A derivation  of $w_0: \Box p \to  \neg  \Diamond  \neg p  < 1$.}
  \label{fig:der}
\end{figure}

Let $\M=\stru{W,R,e}$ be a $\GWM$-model and  $\Ical$  an $\M$-interpretation.
Let $p_0 \in \PV$ and $\a$  a formula not containing $p_0$, let $w_0\in\LC$  and   $w^\star\in W$.
The evaluation function   $\reass{e}{p_0}{\a}$   and the  $\M$-interpretation  $\reass{\Ical}{w_0}{w^\star}$
are defined as follows ($w\in W$ and $p\in\PV$):
\[
\reass{e}{p_0}{\a}(w,p)\:=\;
  \begin{cases}
  e(w,\a) &\text{if $p=p_0$}  
 \\[0.5ex]
 e(w,p) &\text{otherwise}   
  \end{cases}
\hspace{4em}
  \reass{\Ical}{w_0}{w^\star}(w) \:=\;
  \begin{cases}
  w^* &\text{if $w=w_0$}  
 \\[0.5ex]
 \Ical(w) &\text{otherwise}   
  \end{cases}
\]
Let $e'=\reass{e}{p_0}{\a}$ and $w\in W$; one can prove that, for
every formula $\varphi$ not containing $p_0$,
$e'(w,\varphi)=e(w,\varphi)$. Accordingly, $e'(w,p_0\tto \a)=1$, and
this implies that $\stru{W,R,e'}$ is a $\GWM$-model.  Let
$\Ical'=\reass{\Ical}{w_0}{w^\star}$ and let $\chi$ be a constraint
not containing $w_0$; one can prove that $\M\models_{\Ical'}\chi$ iff
$\M\models_{\Ical}\chi$.

\begin{lemma}\label{lemma:soundRule}
  Let $\rho$ be an instance of a rule of $\Calcw$ and 
 let $\Gamma$  be the conclusion of $\rho$. If there is  a \GWM-model $\M$
 such that $\M\models\G$, then there exist a
\GWM-model $\M'$ and a   premise $\G'$ of $\rho$
  such that $\M'\models\G'$.
\end{lemma}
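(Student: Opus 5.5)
We argue by cases on the rule $\rho$, fixing a \GWM-model $\M=\stru{W,R,e}$ and an $\M$-interpretation $\Ical$ with $\M\models_\Ical\G$. The rule $\ruleAxat$ has no premises, so we must show its conclusion is never satisfiable. If $\M\models_\Ical\G$, the substitution $\s(w:p)=e(\Ical(w),p)$, $\s(R(w,w'))=R(\Ical(w),\Ical(w'))$ solves $\Atm{\G}$. These values may be irrational, but $\Atm{\G}$ is a finite system of strict and non-strict order constraints, so solvability over $[0,1]$ implies solvability over $\Qrange$: replace the finitely many real values by rationals in the same order, keeping $0$ and $1$ fixed. The extra constraints in $\Atmp{\G}$ also hold. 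If $w:\Box\a > t\in\G$, then $1\geq e(\Ical(w),\Box\a)>\Ical(t)$, so $1>\Ical(t)$. If $w:\Diam\a<t\in\G$, then $\Ical(t)>0$. Hence $\Atmp{\G}$ would be consistent, a contradiction.

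For the propositional and $R$-rules without new symbols we take $\M'=\M$ with the same $\Ical$. The premise is read off from the Gödel truth tables. For $\land\lt$, $\min(a,b)\lt c$ gives $a\lt c$ or $b\lt c$. For $\ruleToLTAt$, $(a\to b)<c$ forces $a>b$ and $a\to b=b<c$. For $\ruleToLEQAt$, either $a\leq b$, in which case the value is $1$ and $1\leq \Ical(t)$ gives $t\geq1$, or $a>b$ and $b\leq \Ical(t)$. For $\ruleToGAt$, either $a\leq b$, so the value is $1>\Ical(t)$ or $\geq\Ical(t)$ accordingly, or the value is $b$ and $b\gt \Ical(t)$. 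The $R$-rules are handled identically, with $R(\Ical(w),\Ical(w'))$ in place of $a$.

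For the rules introducing a new variable $q$ ($\ruleToLT$, $\ruleToLEQ$, $\ruleToG$), we take $\M'=\stru{W,R,\reass{e}{q}{\b}}$. By the remark preceding the lemma, $\M'$ is a \GWM-model agreeing with $\M$ on all formulas without $q$, and $q$ is new, so $\M'\models_\Ical\G$. Now $e'(\Ical(w),q)=e(\Ical(w),\b)$, so $w:\b\leq w:q$ and $w:\b\geq w:q$ hold. The case analysis is then exactly the atomic one, with $q$ standing for $\b$.

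The modal rules are the main point, and witnessing is used here. For $\Box\lt$, we have $e(\Ical(w),\Box\a)\lt\Ical(t)$. Since $\M$ is witnessed, there is $w^\star\in W$ with $R(\Ical(w),w^\star)\to e(w^\star,\a)=e(\Ical(w),\Box\a)$. Set $\Ical'=\reass{\Ical}{w_1}{w^\star}$, where $w_1$ is new. Then $\M\models_{\Ical'}\G$, and the first premise constraint holds. For each $w:\Box\b\gt' t'\in\G$ we have $R(\Ical(w),w^\star)\to e(w^\star,\b)\geq e(\Ical(w),\Box\b)\gt'\Ical(t')$, because the box value is an infimum over all worlds. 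Dually, each $w:\Diam\b\lt' t'$ yields $R(\Ical(w),w^\star)\land e(w^\star,\b)\leq e(\Ical(w),\Diam\b)\lt'\Ical(t')$. Thus $\M\models_{\Ical'}\Phibd{\G,w,w_1}$. The rule $\Diam\gt$ is symmetric, using a witness of the supremum. Without witnessing, the infimum need not be attained, and the strict constraint $\Box\a<t$ alone would not suffice when $\lt$ is $\leq$. This is why the argument needs $\M$ to be a \GWM-model.
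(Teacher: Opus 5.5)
Your proposal is correct and follows the paper's proof: the same rule-by-rule case analysis, the same updated valuation $\reass{e}{q}{\b}$ for the rules that introduce a fresh variable, and the same witness-based reinterpretation $\reass{\Ical}{w_1}{w^\star}$ for the modal rules. In the $\ruleAxat$ case you make explicit, via an order-preserving rationalization that fixes $0$ and $1$, a step the paper only asserts ("one can define a solution"); and in the $\to\gt$ case you set $q$ to $\b$ where the paper sets it to $\a$, an inessential difference since both choices satisfy the premise.
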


\begin{proof}
  Let $\rho$ be an instance of the rule $\ruleAxat$; since the rule has no premises,
  we have to show that there is no $\M$ such that $\M\models \G$.  Let
  us assume, by contradiction, that there is a \GWM-model
  $\M=\stru{W,R,e}$ and an $\M$-interpretation $\Ical$ such that
  $\M\models_\Ical \G$.  If $ w : \Box \a > t \in \G$, then
  $e(\Ical(w), \Box \a) > \Ical(t)$, hence $1 > \Ical(t)$, which
  implies $\M\models_\Ical 1 > t$.  Similarly, if $ w : \Diam \a < t
  \in \G$, then $\M\models_\Ical 0 < t$.  This proves that
  $\M\models_\Ical \Atmp{\G}$.  By exploiting $R$ and $e$, one can
  define a solution to $\Atmp{\G}$;
  accordingly, $\Atmp{\G}$ is consistent, a contradiction.  We
  conclude that $\M$ does not exist. %

  Let $\rho\neq \ruleAxat$ and let us assume $\M\models_\Ical\G$,
  where $\M=\stru{W,R,e}$ is a \GWM-model and $\Ical$ is an
  $\M$-interpretation; we prove that there exist a \GWM-model $\M'$
  and a premise $\G'$ of $\rho$ such that $\M'\models\G'$.  We only
  detail some representative cases, by showing the corresponding rule
  application.
  \[\small
  \AXC{$\overbrace{w :\a > w : q,\,  w :\b \leq w :q,\,  w : q  < t,\,  \Delta}^{\G_1}$}    
  \RightLabel{$\to <$}
  \UIC{$\underbrace{w: \a\to \b < t,\,\Delta}_{\G}$}
  \DP
  \qquad
  \begin{minipage}{10em}
    $\b\not\in \Vcal\cup\{\bot\}$
    \\[.5ex]
    $q$ does not occur in $\G$
  \end{minipage}
  \]
  By the hypothesis,  $\M\models_{\Ical} w: \a\to \b < t$
  and $\M\models_{\Ical} \Delta$.  This implies that $e(\Ical(w), \a\to
  \b) < \Ical(t)$, hence $e(\Ical(w),\a) > e(\Ical(w),\b)$ and
  $e(\Ical(w),\a\to \b) = e(\Ical(w),\b)$, thus
$e(\Ical(w),\b) < \Ical(t)$. 
Let   $\M'=\stru{W,R,\reass{e}{q}{\b}}$;
we remark  that $\M'$ is a \GWM-model.
  Since $\M\models_{\Ical} \Delta$ and $q$ does not occur in
  $\Delta$, we get $\M'\models_{\Ical} \Delta$.  It is easy to check that
  $\M'\models_{\Ical} w :\b \leq w:q$ and $\M'\models_{\Ical} w :\a >
  w:q$ and $\M'\models_{\Ical} w:q < t$; we conclude $\M'\models_{\Ical}
  \G_1$.

  \[\small
  \AXC{$\overbrace{t \geq 1,\, \Delta}^{\G_1}$}    
  \AXC{$\overbrace{w :\a > w:q,\, w :\b \leq w:q,\, w:q \leq  t,\,\Delta}^{\G_2}$}    
  \RightLabel{$\to\leq$}
  \BIC{$\underbrace{w: \a\to \b \leq t,\,\Delta}_{\G}$}
  \DP
  \qquad 
  \begin{minipage}{10em}
    $\b\not\in \Vcal\cup\{\bot\}$
    \\[.5ex]
    $q$ does not occur in $\G$
  \end{minipage}
  \]
  By the hypothesis, $\M\models_{\Ical} w: \a\to \b \leq t$ and
  $\M\models_{\Ical} \Delta$, hence $e(\Ical(w), \a\to \b)\leq
  \Ical(t)$.  Let us assume $e(\Ical(w),\a) \leq e(\Ical(w),\b)$; then,
  $e(\Ical(w),\a\to \b) = 1$, which implies $1\leq \Ical(t)$, thus
  $\Ical(t) =1$.  We get $\M\models_\Ical t \geq 1$, hence
  $\M\models_\Ical \G_1$. Assume  $e(\Ical(w),\a) > e(\Ical(w),\b)$; then,  $e(\Ical(w),\a\to \b) = e(\Ical(w),\b)$,
  hence $\M\models_\Ical w:\b \leq t$.
  Let   $\M'=\stru{W,R,\reass{e}{q}{\b}}$;
 note that $\M'$ is a $\GWM$-model. 
  Since $\M\models_{\Ical} \Delta$ and $q$ does not occur in $\Delta$, we get
  $\M'\models_{\Ical} \Delta$.  Moreover, one can easily check that
  $\M'\models_{\Ical} w :\b \leq w:q$ and $\M'\models_{\Ical} w :\a >
  w:q$ and $\M'\models_{\Ical} w:q \leq t$; we conclude
  $\M'\models_{\Ical} \G_2$.

  \[\small
  \AXC{$\overbrace{w :\a \leq w:q,\,  w :\b \geq w:q,\,1\gt t,\, \Delta}^{\G_1}$}    
  \AXC{$\overbrace{w:\b \gt t,\, \Delta}^{\G_2}$}
  \RightLabel{$\to\gt$}
  \BIC{$\underbrace{w: \a\to \b \gt t,\,\Delta}_{\G}$}
  \DP
\qquad
  \begin{minipage}{10em}
    $\b\not\in \Vcal\cup\{\bot\}$
    \\[.5ex]
    $q$ does not occur in $\G$
  \end{minipage}
  \]
  By the hypothesis, $\M\models_{\Ical} w: \a\to \b \gt t$ and
  $\M\models_{\Ical} \Delta$, hence $e(\Ical(w),\a\to\b) \gt\Ical(t)$.
  Let us assume  $e(\Ical(w),\a) > e(\Ical(w),\b)$; then $e(\Ical(w),\a\to \b) =
  e(\Ical(w),\b)$, hence $\M\models_\Ical \G_2$.  Let us assume
  $e(\Ical(w),\a) \leq e(\Ical(w),\b)$; then,
  $e(\Ical(w),\a\to \b) = 1$, which implies $1 \gt \Ical(t)$, thus
  $\M\models_\Ical 1 \gt t$.  Let  $\M'=\stru{W,R,\reass{e}{q}{\a}}$;  note that  $\M'$ is a  \GWM-model. 
  Reasoning as in the previous cases,   we can prove that $\M'\models_{\Ical} \G_1$.
  \[\small
  \AXC{$\overbrace{R(w,w_1)\to w_1:\a \lt t,\,\Phibd{\G,w,w_1},\,\Delta}^{\G_1}$}
  \RightLabel{$\Box\lt$}
  \UIC{$\underbrace{w:\Box\a \lt t,\,\Delta}_{\G}$}    
  \DP
  \qquad
  \text{$w_1$ does not occur in $\G$}
\]
  By the hypothesis, $\M\models_{\Ical} w:\Box\a \lt t$ and
  $\M\models_{\Ical} \Delta$, hence $e(\Ical(w),\Box \a)\lt
  \Ical(t)$. 
  Since $\M$ is a $\GWM$-model, there exists a world $w^\star$ of $\M$
  such that $R(\Ical(w),w^\star)\to e(w^\star,\a)=e(\Ical(w),\Box\a)$.
  Let $\Ical'=\reass{\Ical}{w_1}{w^\star}$.
  We prove that:

  \begin{enumerate}[label=(\roman*), ref=(\roman*),leftmargin=*]    
  \item\label{prop:soundRule:box:1}     
 $\M\models_{\Ical'}\Delta$.

\item\label{prop:soundRule:box:2}     
$\M\models_{\Ical'} R(w,w_1)\to w_1:\a\lt t$.

\item\label{prop:soundRule:box:3}     
  $\M\models_{\Ical'} \Phibd{\G,w,w_1}$.
  \end{enumerate}
  Point~\ref{prop:soundRule:box:1} follows from the fact that $\M\models_{\Ical}\Delta$ and $w_1$ does not occur in $\D$.
  Since $e(\Ical(w),\Box \a)\lt \Ical(t)$ and $e(\Ical(w),\Box
    \a) = R(\Ical(w),w^\star)\to e(w^\star,\a)$, we get
    $R(\Ical(w),w^\star)\to e(w^\star,\a)\lt \Ical(t)$; this proves~\ref{prop:soundRule:box:2}.
    Let $\chi\in\Phibd{\G,w,w_1}$;  we show that  $\M\models_{\Ical'}\chi$,
    and this proves ~\ref{prop:soundRule:box:3}.
    Let  $\chi=R(w,w_1)\to w_1:\b\gt t'$. Then,  $w : \Box \b \gt t'\in \Delta$,
hence $e(\Ical(w),\Box \b) \gt \Ical(t')$. It follows that
 $R(\Ical(w),w^\star)\to e(w^\star,\b)\gt
 \Ical(t')$,  hence $\M\models_{\Ical'} \chi$.
 The  proof of   $\M\models_{\Ical'} \chi$ in the  case $\chi= R(w,w_1)\wedge w_1 : \b  \lt' t'$ is similar,
 taking into account that  $w : \Diam \b \lt' t'\in \Delta$.
From points~\ref{prop:soundRule:box:1}--\ref{prop:soundRule:box:3}, we conclude  $\M\models_{\Ical'} \G_1$.  
\end{proof}

\begin{proposition}[Soundness]\label{prop:sound}
  \begin{enumerate}[label=(\roman*), ref=(\roman*),leftmargin=*]    
  \item\label{prop:sound:1} If $\provesw{\G}$, then $\M\not\models
    \G$, for every \GWM-model $\M$.
    
  \item\label{prop:sound:2} $\provesw w:\varphi < 1$ implies
    $\varphi\in\GWL$.
  \end{enumerate}
\end{proposition}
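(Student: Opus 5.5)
For point~\ref{prop:sound:1}, the plan is to prove the contrapositive by induction on the height of a derivation $\Dcal$ of $\G$. Suppose some \GWM-model $\M$ satisfies $\M\models\G$, and let $\rho$ be the rule applied at the root of $\Dcal$.

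\emph{Base case.} If $\rho$ is $\ruleAxat$, Lemma~\ref{lemma:soundRule} states directly that no \GWM-model satisfies $\G$. This contradicts the assumption.

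\emph{Inductive step.} If $\rho$ is any other rule, Lemma~\ref{lemma:soundRule} gives a \GWM-model $\M'$ and a premise $\G'$ of $\rho$ with $\M'\models\G'$. The premise $\G'$ is the root of a subderivation of strictly smaller height. By the induction hypothesis, no \GWM-model satisfies $\G'$, contradicting $\M'\models\G'$.

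Hence no \GWM-model satisfies a derivable $\G$. The key point is that the induction quantifies over all \GWM-models at each step. This matters because the lemma may change the model, for instance by updating the evaluation with $\reass{e}{q}{\b}$.

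For point~\ref{prop:sound:2}, I would argue by contraposition. Assume $\provesw w:\varphi<1$ and suppose $\varphi\notin\GWL$. Then there are a \GWM-model $\M=\stru{W,R,e}$ and a world $w^\star\in W$ with $e(w^\star,\varphi)<1$.

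Take any $\M$-interpretation $\Ical$ with $\Ical(w)=w^\star$; for example, map every label to $w^\star$. Then $\Ical(w:\varphi)=e(w^\star,\varphi)<1=\Ical(1)$, so $\M\models_\Ical w:\varphi<1$ and therefore $\M\models w:\varphi<1$. This contradicts point~\ref{prop:sound:1}, so $\varphi\in\GWL$.

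There is no real obstacle here, since all the semantic work is already done in Lemma~\ref{lemma:soundRule}. The only point needing care is the one noted above: the induction statement must be ``no \GWM-model satisfies $\G$'', not a statement about a fixed model.
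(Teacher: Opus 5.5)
Your proof is correct and follows the paper's own argument. Point~(i) is proved by induction on the derivation using Lemma~\ref{lemma:soundRule}, with the Ax case vacuous and the hypothesis quantified over all \GWM-models because the lemma may change the model. Point~(ii) is obtained from (i) by taking an interpretation that sends $w$ to a world $w^\star$ with $e(w^\star,\varphi)<1$.
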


\begin{proof} 
  Point~\ref{prop:sound:1} can be  proved by induction on the
  depth of a derivation of $\G$, by exploiting
  Lemma~\ref{lemma:soundRule}.  Let
  us  assume  $\varphi\not\in\GWL$.  Then, there exists a \GWM-model
  $\M=\stru{W,R,e}$ and a world $ w^\star\in W$ such that
  $e(w^\star,\varphi) < 1$.  Let $\Ical$ be an $\M$-interpretation
  mapping $w$ to $w^\star$; it holds that $\M\models_\Ical w:\varphi < 1$. 
  By~\ref{prop:sound:1},
we get $\nprovesw w:\varphi < 1$, and this
proves~\ref{prop:sound:2}. 
\end{proof}

\noindent
We show that $\Calcw$ is \emph{strongly terminating}, namely: there
exists a well-founded relation $\prec$ such that, for every
application $\rho$ of a rule of $\Calcw$, if $\G$ is the conclusion of
$\rho$ and $\G'$ any of the premises of $\rho$, then $\G'\prec \G$.
Equivalently, given a finite multiset $\G$ and repeatedly applying the
rules of $\Calcw$ upwards, proof search eventually halts, no matter
which strategy is used.

To prove strong termination of $\Calcw$, we define the relation
$\precc$ between multisets of constraints.  To this aim, in
Fig.~\ref{fig:weight} we introduce a weight function $\wgname$ on
constraints and multisets of constraints $\G$, together with the
multisets $\G[w]$ (multiset of constraints) and $\sizem{\G}$ (multiset
of natural numbers).

Given the finite multisets of natural numbers
$\Theta_1$ and $\Theta_2$,  we set:
\[
\Theta_1 \precm \Theta_2
\quad\mbox{iff}\quad
\Theta_1\neq
\Theta_2 \;\land\; \left(\; \forall k_1 \in \Theta_1\setminus
  \Theta_2.\ \exists k_2\in \Theta_2 \setminus \Theta_1.\ k_1 < k_2
  \;\right).
\]  
The relation $\precm$ is a multiset order, hence $\precm$ is
well-founded (see~\cite{BaaderN:98}, Th.~2.5.5 and Lemma 2.5.6).
Given the multisets of constraints $\G_1$ and $\G_2$,
we set $\G_1 \precc \G_2$ iff  $\sizem{\G_1} \precm\sizem{\G_2}$. 


The relation $\precc$ is well-founded;
in the next lemma we show that  $\precc$ 
can be used to instantiate $\prec$ in the definition of strong termination.

\begin{lemma}\label{lemma:rulesDec}
  Let $\rho$ be an instance of a rule of the calculus $\Calcw$, let
  $\Gamma$ be the conclusion of $\rho$ and  $\G'$ any of the premises of $\rho$.
  Then,  $\G'\precc \G$.
\end{lemma}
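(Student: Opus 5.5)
The proof is by case analysis on the rule $\rho$, using the definitions of $\wgname$, $\G[w]$ and $\sizem{\G}$ in Fig.~\ref{fig:weight}. I take $\sizem{\G}$ to be the multiset of the weights $\wg{\G[w]}$, one for each label $w$ occurring in $\G$, with atomic constraints of weight $0$; I have not checked this reading against the figure, and the modal case below relies on it.

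The reduction is to one criterion. Only constraints with labels occurring in the main constraint $\chi$ change when passing from $\G$ to $\G'$. So it suffices to show that, for each premise $\G'$:
\begin{itemize}
\item the multiset $\sizem{\G}$ loses one entry $\wg{\G[w]}$;
\item the entries that replace it in $\sizem{\G'}$ are all strictly smaller than it.
\end{itemize}
By the definition of $\precm$, this gives $\sizem{\G'}\precm\sizem{\G}$, hence $\G'\precc\G$. I would state this criterion first as a small auxiliary observation and then discharge it rule by rule. Rule $\ruleAxat$ has no premises and needs nothing.

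\textbf{Propositional rules and $R$-rules.} Here no new label is created. Each premise replaces $\chi$, with label $w$, by constraints that either are atomic (such as $t\geq 1$, $1\gt t$, $w:p<t$, $R(w,w')\gt t$) or concern proper subformulas of the formula in $\chi$. Hence $\G'[w]$ is obtained from $\G[w]$ by removing $\chi$ and adding constraints of smaller total weight, and $\wg{\G'[w]}<\wg{\G[w]}$.
\begin{itemize}
\item For $\land$ and $\lor$ this is immediate.
\item For $\ruleToLT$, $\ruleToLEQ$ and $\ruleToG$, the fresh variable $q$ only occurs in atomic c-terms $w:q$. Thus $w:\a>w:q$ and $w:\b\leq w:q$ together weigh about $\wg{w:\a}+\wg{w:\b}$, which is strictly below $\wg{w:\a\to\b\abstractorder t}$ because the implication is counted.
\item For the $R$-rules, $R(w,w')\star w':\a$ is replaced by an atomic constraint and/or $w':\a\abstractorder t$ (or $w':\a\abstractorder R(w,w')$). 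This removes the extra weight assigned to $R$-constraints, so $\wg{\G'[w']}<\wg{\G[w']}$.
\end{itemize}

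\textbf{Modal rules (the main obstacle).} For $\Box\lt$ and $\Diam\gt$ two things change. The entry $\wg{\G[w]}$ decreases, since the main constraint is removed from $\G[w]$. In addition, a new entry $\wg{\G'[w_1]}$ appears for the fresh label $w_1$. The new set $\G'[w_1]$ consists of the $R$-constraint coming from $\chi$ together with all of $\Phibd{\G,w,w_1}$. Each of these corresponds to a $\Box$- or $\Diam$-constraint of $\G[w]$, stripped of its outer modality.

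The crucial estimate is therefore
\[
\wg{\G'[w_1]}<\wg{\G[w]}.
\]
This does not follow from formula size alone: $\Phibd{\G,w,w_1}$ copies every boxed and diamond constraint at $w$, and the constraints originating them remain in $\G'[w]$.

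I would first try to prove that the weight of an $R$-constraint $R(w,w_1)\star w_1:\b$ is at most the weight of the originating modal constraint $w:\Box\b\gt t$ or $w:\Diam\b\lt t$. I expect Fig.~\ref{fig:weight} to arrange this through the modal degree, or an exponential weighting in the formula size. Summing over all these constraints and using that $\G[w]$ also contains the main constraint (or another non-atomic constraint) gives the strict inequality. If the weight function makes this only non-strict, I would instead use that $\G[w]$ contains the main constraint $\chi$, whose copy in $\G'[w_1]$ is strictly lighter than $\chi$ itself.

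Once this inequality holds, both replacement entries $\wg{\G'[w]}$ and $\wg{\G'[w_1]}$ are strictly smaller than $\wg{\G[w]}$, and the criterion yields $\G'\precc\G$.
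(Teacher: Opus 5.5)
Your architecture is the paper's. One entry $\wg{\G[w]}$ of $\sizem{\G}$ is replaced by strictly smaller entries; for $\Box\lt$ and $\Diam\gt$ it is replaced by the two entries $\wg{\G'[w]}$ and $\wg{\G'[w_1]}$. The step that fails is your treatment of two of the $R$-rules.

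In $R\to<$, and in the right premise of $R\to\leq$, the constraint $R(w,w')\to w':\a\lt t$ is not replaced by ``an atomic constraint and/or'' a single constraint on $w':\a$. The premise contains \emph{both} $w':\a<R(w,w')$ and $w':\a<t$ (resp.\ $w':\a\leq t$). The weight of Fig.~\ref{fig:weight} is additive: $\wg{w':\a}=\size{\a}$, $\wg{R(w,w')\star w':\a}=\size{\a}+1$, and $\wg{\G[w']}$ is a sum. So the entry of $w'$ changes by $2\size{\a}-(\size{\a}+1)=\size{\a}-1$. It stays equal when $\size{\a}=1$ and grows when $\size{\a}\geq 2$.

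Concretely, $\Box\lt$ turns $\{w:\Box((p\land q)\land r)<1\}$ into $\G_1=\{R(w,w_1)\to w_1:(p\land q)\land r<1\}$. Then $R\to<$ turns $\G_1$ into $\G_2=\{w_1:(p\land q)\land r<R(w,w_1),\ w_1:(p\land q)\land r<1\}$. The entry of $w_1$ goes from $3$ to $4$ and no other entry changes, so $\G_2\not\precc\G_1$.

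The paper's proof does not close this gap. It works out only $\to<$ and $\Box\lt$ as ``representative'' cases and says nothing about the $R$-rules. The example shows that, with $\precc$ as defined, the lemma is in fact false for $R\to<$ and $R\to\leq$. Strong termination (Prop.~\ref{prop:term}) survives, but the measure must change. Any of the following works:
\begin{itemize}
\item Use weights exponential in formula size, which is the kind of weighting you guessed at: $\wg{w:\varphi}=3^{\size{\varphi}}$ for non-atomic $\varphi$ and $\wg{R(w,w')\star w':\varphi}=3^{\size{\varphi}+1}$. Then $2\cdot 3^{\size{\a}}<3^{\size{\a}+1}$; base $2$ only gives equality.
\item For each label, keep the multiset of the weights of the non-atomic constraints in $\G[w]$ instead of their sum, and use the nested multiset ordering.
\item Let $R\to<$ and $R\to\leq$ introduce a fresh variable, as $\to<$ does.
\end{itemize}

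Your modal case, on the other hand, is correct and is the paper's argument. Your worry that it ``does not follow from formula size alone'' is unfounded. Since $\size{\Box\b}=\size{\Diam\b}=\size{\b}+2$ and $\wg{R(w,w_1)\star w_1:\b}=\size{\b}+1$, every constraint of $\G'[w_1]$ is exactly one lighter than its source in $\G[w]$. The sources left at $w$ form the separate entry $\G'[w]$, which the multiset ordering handles. Writing $\D_w$ for the rest of $\G[w]$, summing gives $\wg{\G'[w_1]}\leq\size{\a}+1+\wg{\D_w}<\size{\a}+2+\wg{\D_w}=\wg{\G[w]}$. Strictness comes from the main constraint, exactly as in your fallback. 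This is slightly more accurate than the paper's claim $\wg{\Phibd{\D_w,w,w_1}}<\wg{\D_w}$, which fails, e.g., when $\D_w$ is empty.
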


\begin{proof}
  We only discuss two representative cases.
  Let us consider the following application of rule $\to < $:
  \[
  \begin{array}{l}
    \AXC{$\overbrace{w :\a > w : q,\,  w :\b \leq w :q,\,  w : q  < t,\,\D_w,\,\Theta}^{\G_1}$}    
    \RightLabel{$\to < $}
    \UIC{$\underbrace{w: \a\to \b < t,\,\D_w,\,\Theta}_\G$}
 \DP
 \qquad
 \begin{minipage}{20em}
   all the constraints in $\Delta_w$ have label $w$
   \\[.5ex]
   no constraint in $\Theta$ has label $w$
 \end{minipage}
 \\[8ex]
 \G[w] =\{w:\a\to \b < t\} \cup \D_w 
 \hspace{3em}
 \G_1[w] =\{\,  w :\a > w : q,\,  w :\b \leq w :q,\,  w : q  < t  \,\}\cup\D_w
 \\[1ex]
 \sizem{\G} \,=\, \{\, \wg{\G[w]} \,\}\cup \sizem{\Theta}
 \hspace{4em}
 \sizem{\G_1} \;=\;
 \{\, \wg{\G_1[w]} \,\}    \cup \sizem{\Theta}  
\end{array}
\]
To prove that $\sizem{\G_1}\precm \sizem{\G}$,
we show that  $\wg{\G_1[w]} < \wg{\G[w]}$. 
The following holds:
\[
\begin{array}{rcl}
  \wg{\G[w]}&\,=\, &\wg{w:\a\to \b < t} + \wg{\D_w} \;=\; \wg{\a} +  \wg{\b} + 1 +  \wg{\D_w} 
  \\[1ex]
  \wg{\G_1[w]}&\,=\, & \wg{w :\a > w : q} +\wg{w :\b \leq w :q} + \wg{w : q  < t } + \wg{\D_w}
  \\
  &\,=\,&  \wg{\a} +  \wg{\b} + 0 +  \wg{\D_w} 
\end{array}
\]
Since $\wg{\G_1[w]} < \wg{\G[w]}$, we get $\sizem{\G_1}\precm
\sizem{\G}$, hence $\G_1\precc \G$.


\smallskip 

Let us consider the following application of $\Box\lt$:
\[
\begin{array}{l}
  \AXC{$\overbrace{\D_{w_1}, \;\D_w,\;\Theta}^{\G_1}$}
  \RightLabel{$\Box\lt$}
  \UIC{$\underbrace{w:\Box\a \lt t,\;\Delta_w,\;\Theta}_\G$}    
  \DP
  \qquad
  \begin{minipage}{22em}
    all the constraints in $\Delta_w$ have label $w$
    \\[.5ex]
    no constraint in $\Theta$ has label $w$
    \\[1ex]
    $\D_{w_1}\;=\;\{\,  R(w,w_1)\to w_1:\a \lt t\,\}\,\cup\, \Phibd{\Delta_w,w,w_1}$
  \end{minipage}
  \\[8ex]
  \G[w] =\{w:\Box\a \lt t\} \cup \D_w
  \hspace{5em}
  \G_1[w] =\D_w
  \hspace{5em}
  \G_1[w_1]=\D_{w_1}
  \\[1ex]
  \sizem{\G} \;=\; \{\, \wg{\G[w]} \,\}\,\cup\, \sizem{\Theta} 
  \hspace{4em}
  \sizem{\G_1} \;=\;
  \{\, \wg{\G_1[w]},\,   \wg{\G_1[w_1]} \,\}    \cup \sizem{\Theta}   
\end{array}
\]
We prove that $\sizem{\G_1}\precm   \sizem{\G}$. To this aim,  we show that:

\begin{enumerate}[label=(\roman*), ref=(\roman*),leftmargin=*]
\item\label{lemma:rulesDec:box1} $\wg{\G_1[w]} < \wg{\G[w]}$.
    
\item\label{lemma:rulesDec:box2}
  $\wg{\G_1[w_1]} < \wg{\G[w]}$.
\end{enumerate}
The proof of point~\ref{lemma:rulesDec:box1} is immediate since $\G[w]
= \{w:\Box\a \lt t\}\cup\G_1[w]$.  We
prove~\ref{lemma:rulesDec:box2}. We have:
\[
\begin{array}{l}
  \wg{\G[w]}\;=\; \wg{w:\Box\a \lt t } + \wg{\Delta_w} \;=\; \wg{\a} + 2 +  \wg{\Delta_w}
  \\[1ex]
  \wg{\G_1[w_1]}\;=\; 
  \wg{ R(w,w_1)\to w_1:\a } + \wg{ \Phibd{\Delta_w,w,w_1} }
  \;=\;  \wg{\a} + 1 + \wg{ \Phibd{\Delta_w,w,w_1}}
\end{array}
\]
To conclude the proof of~\ref{lemma:rulesDec:box2}, we show that:
  
\begin{enumerate}[label=(\roman*), ref=(\roman*),leftmargin=*,start=3]    
\item\label{lemma:rulesDec:box3}
  $\wg{ \Phibd{\Delta_w,w,w_1}} <  \wg{\Delta_w}$.
\end{enumerate}
Let $\chi_1\in \Phibd{\Delta_w,w,w_1}$. There exists $\chi\in
\Delta_w$ such that one of the two following conditions holds:
\begin{enumerate}[label=(\alph*), ref=(\alph*),leftmargin=*]    
\item\label{lemma:rulesDec:box3a}
  $\chi = w:\Box \b \gt' t'$ and $\chi_1 = R(w,w_1)\to w_1 : \b \gt' t'$;
  
\item\label{lemma:rulesDec:box3b}
  $\chi = w:\Diam \b \lt' t'$ and $\chi_1 = R(w,w_1)\land w_1 : \b \lt' t'$.
\end{enumerate}
In both cases, it holds that $\wg{\chi_1}= \wg{\b} +1$ and $\wg{\chi}=
\wg{\b} +2$, hence $\wg{\chi_1} < \wg{\chi}$; this
proves~\ref{lemma:rulesDec:box3}.  From~\ref{lemma:rulesDec:box1}
and~\ref{lemma:rulesDec:box2} it follows that $\sizem{\G_1}\precm
\sizem{\G}$, hence $\G_1\precc \G$.
\end{proof}

As an immediate consequence of Lemma~\ref{lemma:rulesDec}, we get:

\begin{proposition}\label{prop:term}
  The calculus $\Calcw$ is strongly terminating.
\end{proposition}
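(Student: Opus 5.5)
The plan is to derive strong termination directly from Lemma~\ref{lemma:rulesDec}, taking $\precc$ as the well-founded relation $\prec$ required by the definition. Two facts are needed. First, $\precc$ is well-founded. Second, every premise of every rule application is $\precc$-smaller than its conclusion; this second fact is exactly Lemma~\ref{lemma:rulesDec}.

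For well-foundedness, I will argue by pulling back along the map $\G\mapsto\sizem{\G}$. By definition, $\G_1\precc\G_2$ iff $\sizem{\G_1}\precm\sizem{\G_2}$. Each $\sizem{\G}$ is a finite multiset of natural numbers, since $\G$ is finite and so has finitely many labels, and each weight is a natural number. The relation $\precm$ is the Dershowitz--Manna multiset extension of $<$ on $\mathbf{N}$, and it is well-founded by the cited results of~\cite{BaaderN:98}. Now suppose there were an infinite descending chain $\G_0\succ_{\mathrm{c}}\G_1\succ_{\mathrm{c}}\cdots$. Its image $\sizem{\G_0}\succ_{\mathrm{m}}\sizem{\G_1}\succ_{\mathrm{m}}\cdots$ would be an infinite $\precm$-descending chain, which is impossible. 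Hence $\precc$ is well-founded.

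For the equivalent formulation in terms of proof search, consider any backward proof-search tree whose root is a finite multiset. Each branch is a $\precc$-descending sequence, so every branch is finite. Each rule has at most two premises, so the tree is finitely branching. By K\"onig's lemma the whole tree is therefore finite, whatever strategy is used.

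The only substantive work is Lemma~\ref{lemma:rulesDec}, which is given and has been checked only in representative cases, so there is no real obstacle here. The one point worth making explicit is that new labels introduced by $\Box\lt$ and $\Diam\gt$ add an element to $\sizem{\cdot}$. The multiset order handles this, because the new element $\wg{\G_1[w_1]}$ is dominated by the removed element $\wg{\G[w]}$.
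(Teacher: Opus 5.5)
Your reduction is the paper's own. The paper also derives the proposition immediately from Lemma~\ref{lemma:rulesDec}, with $\precc$ well-founded as the pullback of $\precm$ along $\G\mapsto\sizem{\G}$, and your well-foundedness and K\"onig's-lemma arguments are fine. The gap is your claim that Lemma~\ref{lemma:rulesDec} poses ``no real obstacle''. Two of the cases not written out are $R\to<$ and the right premise of $R\to\leq$, and with the weights of Fig.~\ref{fig:weight} the lemma is false there. These rules replace $R(w,w')\to w':\a\lt t$, of weight $\size{\a}+1$, by $w':\a<R(w,w')$ and $w':\a\lt t$, of total weight $2\size{\a}$. All of these constraints have label $w'$, and no other label is affected. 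For instance, start from $w_0:\Box(p\land q)<1$ and apply $\Box\lt$ and then $R\to<$. The weight at $w_1$ is $2$ before and after, so $\sizem{\cdot}$ is literally unchanged and the premise is not $\precc$-smaller ($\precm$ is irreflexive). With $\Box((p\land q)\land r)$ instead, the weight at $w_1$ rises from $3$ to $4$. So the step ``every premise is $\precc$-smaller than its conclusion'' fails, and as written neither your argument nor the paper's one-line derivation proves the proposition. The case you single out, new labels created by $\Box\lt$ and $\Diam\gt$, is handled correctly; the trouble lies in the $R$-implication rules.

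The statement is still true, and your argument goes through once Lemma~\ref{lemma:rulesDec} is re-established for a modified measure. For example, set $\size{\Box\a}=\size{\Diam\a}=2\size{\a}+2$ and $\wg{R(w,w')\star w':\varphi}=2\size{\varphi}+1$, keeping everything else. Then the rules behave as follows:
\begin{itemize}
\item $R\to<$ and $R\to\leq$ lower the weight at $w'$ by at least $1$: from $2\size{\a}+1$ to $2\size{\a}$, or by dropping the main constraint in the left premise of $R\to\leq$.
\item The remaining propositional rules and $R$-rules still lower the weight at the label of the main constraint, and add only weight-$0$ constraints elsewhere.
\item For $\Box\lt$ (and dually $\Diam\gt$), the label $w$ loses $2\size{\a}+2$, while the new label $w_1$ receives $2\size{\a}+1+\sum(2\size{\b}+1)$. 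This is strictly below the old weight $2\size{\a}+2+\wg{\Delta_w}$ of $w$, because each $\b$ comes from a distinct constraint $w:\Box\b\gt' t'$ or $w:\Diam\b\lt' t'$ of weight $2\size{\b}+2$ in $\Delta_w$.
\end{itemize}
Alternatively, keep the paper's weights but record at each label the multiset of positive constraint weights instead of their sum, and compare under the multiset extension of $\precm$. The duplication is then harmless, since both copies of $w':\a$ are lighter than the constraint they replace.
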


Consequently, any backward proof search strategy for $\Calcw$
terminates.  Following~\cite{FerFioFio:2013,FioFer:2021jlc}, we focus
on strategies where failure is certified by countermodels, i.e.: if
proof search for a multiset of constraints $\G$ fails, a \GWM-model
$\M$ such that $\M\models \G$ can be built; we call $\M$ a
\emph{countermodel} for $\G$, since $\M$ ascertains that $\G$ is not
provable in $\Calcw$ (by its soundness).  Let $\G$ be a multiset of
constraints:

\begin{itemize}
\item $\G$ is \emph{reduced} iff no rule of $\Calcw$ can be backward
  applied to $\G$; thus, every non-atomic constraint in $\G$ has the
  form $w:\Box \a \gt t$ or $w:\Diam \a \lt t$.

\item $\G$ is \emph{plain} iff only a modal rule can be backward
  applied to $\G$; thus, every non-atomic constraint in $\G$ has the
  form $w:\Box \a \abstractorder t$ or $w:\Diam \a \abstractorder t$.
\end{itemize}
An application of a modal rule is plain iff its conclusion is plain.
By $\app{\chi}{\G_k}{\G_{k+1}}$ we mean that $\G_{k+1}$ is a premise
of an application of a rule $\rho$ of $\Calcw$ having conclusion
$\G_k$ and main constraint $\chi$, with $\chi\in\G_k$.  A branch
$\Bcal$ is a sequence $\stru{\G_0,\dots,\G_n}$ such that, for every $0
\leq k < n$, there is $\chi_k$ such that
$\app{\chi_k}{\G_k}{\G_{k+1}}$; thus, $\Bcal$ represents a branch of a
tree of $\Calcw$.  The branch $\Bcal$ is \emph{saturated} iff the
following holds:

\begin{figure}[t]
  \centering
  \begin{tabular}{p{40ex}p{40ex}}\small
    \[
    \begin{array}{l}
      \wg{u} \;=\;
      \begin{cases}
        0 & \mbox{if $u$ is an atomic c-term}
        \\
        \size{\varphi}& \mbox{if $u = w:\varphi$}
        \\
        \size{\varphi} + 1&\begin{minipage}[t]{20ex}
          if $u = R(w,w')\star w':\varphi$\par~~~with $\star\in \{\land,\to\}$
        \end{minipage}
      \end{cases}
      \\[7ex]
      \wg{u\abstractorder t}\,=\,\wg{u}
      \hspace{3em}
      \wg{\G}\,=\,\sum_{\chi\in\Gamma} \wg{\chi}
      \\[1.5ex]
      \G[w] \,=\,\{\, u  \abstractorder t\in\G~|~\mbox{$\lab{u} = w$}\,\}
      \qquad
      \sizem{\G} \,=\, \{\, \wg{\G[w]}~|~\mbox{$w$ occurs in $\G$}    \,\}
    \end{array}    
    \]
    &\small
    \[
    \size{\varphi} \;=\;
    \begin{cases}
      0 & \mbox{if $\varphi\in\PV\cup\{\bot\}$}
      \\
      \size{\a} + \size{\b} + 1 &\begin{minipage}[t]{20ex}
        if $\varphi=\a\star\b$\par~~~with $\star\in \{\land,\lor,\to\}$
      \end{minipage}
      \\[2ex]
      \size{\a} + 2 & \mbox{if $\varphi=\Box\a$ or $\varphi=\Diam\a$}
    \end{cases}
    \]
  \end{tabular}
  \vspace{-2ex}
\caption{Weight function $\wgname$ on constraints and the multisets $\G[w]$,  $\sizem{\G}$.}
\label{fig:weight}
\end{figure}
  
\begin{itemize}[leftmargin=*]
\item $\G_0$ is a finite set of constraints of the form
  $w:\varphi\abstractorder t$ and  $\G_n$ is reduced;

\item every application of a modal rule in $\Bcal$ is plain.
\end{itemize}

\noindent
We show how to construct a countermodel for $\G_0$ from $\G_n$.  Let
$\Gat$ be a finite consistent set of atomic constraints and let $\s$
be a solution to $\Gat$.  By $\Mod{\Gat,\sigma}$ we denote the
$\GM$-model $\M=\stru{W,R,e}$ such that $W$ is the set of labels
occurring in $\Gat$ and:
\[\small
\begin{array}{l}
  R(w,w') \;=\;
  \begin{cases}
    \s(R(w,w')) & \mbox{if $R(w,w')$ occurs in $\Gat$}  
    \\
    0 & \mbox{otherwise}
  \end{cases}
  \qquad
  e(w,p) \;=\;
  \begin{cases}
    \s(w:p) & \mbox{if $w:p$ occurs in $\Gat$}  
    \\
    0 & \mbox{otherwise}
  \end{cases}
\end{array}
\]
We remark that $\M$ is a discrete $\GWM$-model.  A
$\Mod{\Gat,\sigma}$-interpretation $\Ical$ is \emph{canonical} iff
$\Ical(w)=w$ for every label $w$ occurring in $\Gat$. In
Lemma~\ref{lemma:count} we show that if $\Bcal=\stru{\G_0,\dots,\G_n}$
is saturated and $\s$ is a solution to $\Atmp{\G_n}$, then
$\Mod{\Atmp{\G_n},\s}$ is a countermodel for $\G_0$.

\begin{example}\label{ex:satBrancj}
  Let $\varphi=\Box(p\lor q) \to (\Box p\lor \Diam q)$ be an instance
  of $(Cr)$.  The tree shown in Fig.~\ref{fig:saturatedbranch}
  corresponds to a saturated branch $\Bcal=\stru{\G_0,\dots,\G_7}$,
  where $\G_0=\{w_0:\varphi < 1\}$ is the root and $\G_7$ is the top
  multiset.  When the applied rule has two premises, the annotation
  $(l)$ (left) or $(r)$ (right) specifies the selected one; in every
  rule application, the main constraint is underlined.

  The set $\Atmp{\G_7}$ consists of the following atomic constraints:
  \[\small
  w_1:p<R(w_0,w_1),\;\; w_1:p\leq c, \;\; w_1 : q \geq R(w_0,w_1),\;
  1> c,\;\; R(w_0,w_1)\leq c,\;\; c < 1
  \]
  The set $\Atmp{\G_7}$ is consistent; a solution to $\Atmp{\G_7}$ is
  any substitution $\s$ over $\Atmp{\G_7}$ such that
  \[
  \sigma(w_1:p)\, < \,\sigma\left(R(w_0,w_1)\right)\,\leq\, \sigma(c) \,< 1
  \qquad
  \sigma\left(R(w_0,w_1)\right)\,\leq \, \sigma(w_1: q) 
  \]
  For every solution $\s$, the \GWM-model $\M=\Mod{\Atmp{\G_7},\s}$
  satisfies $\M \models_\Ical w_0 :\varphi < 1$, where $\Ical$ is a
  canonical $\M$-interpretation; as a consequence, $\M$ is a countermodel for
  $\varphi$, witnessing that $\varphi\not\in\GWL$.
  Note that the worlds of $\M$ are $w_0$ and $w_1$ and,
  as expected, $\M$ is not crisp, since $0 < \sigma(R(w_0,w_1)) < 1$.
    A concrete solution
  $\s^\star$ is obtained by setting $\sigma^\star(w_1:p) = 0.5$ and
  $\sigma^\star(R(w_0,w_1))= \sigma^\star(c) = \sigma^\star(w_1:q) =
  0.6$.  The model $\Mod{\Atmp{\G_7},\s^\star}$ is essentially the
  same as the one displayed in Fig.~\ref{fig:countGW}.
  \EndEx
\end{example}

\begin{figure}[t]
  \centering
    \[\small
  \begin{array}{c}
    c = w_0:q_0\qquad\D\;=\;\{\,     w_0:  \Box(p\lor q) > c,\;\;  w_0: \Diam q  \leq  c,\;\;  c   < 1 \,\}    
    \\[2ex]
    \AXC{$w_1:p<R(w_0,w_1),\;\; w_1:p\leq c, \;\;  w_1 : q \geq R(w_0,w_1),\;\; 1> c,\;\;  R(w_0,w_1)\leq c,\;\;\D$}     
    \RightLabel{$R\land\lt(l)$}
    \UIC{$w_1:p<R(w_0,w_1),\;\; w_1:p\leq c, \;\;  w_1 : q \geq R(w_0,w_1),\;\; 1> c,\;\; \underline{R(w_0,w_1)\land w_1 : q \leq c},\;\;\D$}
    \RightLabel{$\lor\gt(r)$}
    \UIC{$w_1:p<R(w_0,w_1),\;\; w_1:p\leq c, \;\;  \underline{w_1 :p\lor q \geq R(w_0,w_1)},\;\; 1> c,\;\; R(w_0,w_1)\land w_1 : q \leq c,\;\;\D$}
    \RightLabel{$R\to\gt(l)$}
    \UIC{$w_1:p<R(w_0,w_1),\;\; w_1:p\leq c, \;\;  \underline{R(w_0,w_1)\to w_1 :p\lor q > c},\;\; R(w_0,w_1)\land w_1 : q \leq c,\;\;\D$} 
    \RightLabel{$R\to\leq(r)$}
    \UIC{$\underline{R(w_0,w_1)\to w_1 :p \leq c},\;\;  R(w_0,w_1)\to w_1 :p\lor q > c,\;\; R(w_0,w_1)\land w_1 : q \leq c,\;\;\D$} 
    \RightLabel{$\Box\lt$}
    \UIC{$w_0:  \Box(p\lor q) > c,\;\; \underline{w_0 : \Box p \leq  c},\;\; w_0: \Diam q  \leq  c,\;\;  c   < 1$} 
    \RightLabel{$\lor\lt$}
    \UIC{$w_0:  \Box(p\lor q) > c,\;\;\underline{w_0 : (\Box p\lor \Diam q) \leq c},\;\; c  < 1$}
    \RightLabel{$\to <$}
    \UIC{\underline{$w_0:  \Box(p\lor q) \to (\Box p\lor \Diam q)  < 1$}}
    \DP
  \end{array}
  \]  
  \vspace{-2ex}
  \caption{A saturated branch.}
  \label{fig:saturatedbranch}
\end{figure}

\begin{lemma}\label{lemma:count}
  Let $\Bcal=\stru{\G_0,\dots,\G_n}$ be a saturated branch and $\s$ a
  solution to $\Atmp{\G_n}$.  Let $\M=\Mod{\Atmp{\G_n},\s}$, $\Ical$ a
  canonical $\M$-interpretation  and $\chi\in
  \bigcup_{k\in\{0,\dots,n\}}\G_k$.  Then, $\M\models_\Ical \chi$.
\end{lemma}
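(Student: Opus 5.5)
We argue by backward induction on $k$, from $k=n$ down to $k=0$, proving that $\M\models_\Ical\chi$ for every $\chi\in\G_k$.
Within each $\G_k$ we use a secondary induction on $\wg{\chi}$. This is needed because some constraints of $\G_n$, namely $w:\Box\a\gt t$ and $w:\Diam\a\lt t$, are not decomposed within $\G_n$ itself; they were decomposed earlier in the branch.

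\textbf{Atomic constraints.} Since $\Ical$ is canonical, every atomic c-term $t$ occurring in $\Atmp{\G_n}$ satisfies $\Ical(t)=\s(t)$. So every atomic constraint in $\G_n$ holds because $\s$ is a solution. Atomic constraints are never removed by rules (they persist in $\Gamma$), so every atomic constraint in any $\G_k$ belongs to $\G_n$ and holds.

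\textbf{Non-modal non-atomic constraints.} Let $\chi\in\G_k$ be non-atomic and not of the form $w:\Box\a\gt t$ or $w:\Diam\a\lt t$. Since $\G_n$ is reduced, $\chi$ was the main constraint of some step $\app{\chi}{\G_j}{\G_{j+1}}$ with $j\geq k$. Here we must observe that constraints are only removed when they are main. Hence the premise constraints introduced in $\G_{j+1}$ hold by the outer induction hypothesis. Checking the rule then gives $\chi$; this is the soundness reasoning of Lemma~\ref{lemma:soundRule} read in the converse direction.

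The implication rules deserve care. For $\to<$ we have $w:\a>w:q$, $w:\b\leq w:q$ and $w:q<t$, which give $e(w,\a)>e(w,\b)$, so $e(w,\a\to\b)=e(w,\b)\leq e(w,q)<t$. For $\to\gt$ with the left premise we have $w:\a\leq w:q\leq w:\b$, so $e(w,\a\to\b)=1\gt t$.

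The fresh variable $q$ is interpreted in $\M$ via $\s$, and only the inequalities in the displayed premises are used. The $R$-rules are handled analogously, using $\Ical(R(w,w'))=R(w,w')$.

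\textbf{Modal constraints $w:\Box\a\lt t$.} Consider $w:\Box\a\lt t$, or dually $w:\Diam\a\gt t$. Since $\G_n$ is reduced, the rule $\Box\lt$ was applied to it, producing $w_1$ and $R(w,w_1)\to w_1:\a\lt t$ further up the branch. That constraint holds by the induction hypothesis. Since $e(w,\Box\a)$ is a minimum (the model is finite), $e(w,\Box\a)\leq R(w,w_1)\to e(w_1,\a)\lt t$.

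\textbf{The main obstacle: $w:\Box\a\gt t$.} The key case is $w:\Box\a\gt t$ (and dually $w:\Diam\a\lt t$), which persists up to $\G_n$. We need $R(w,w')\to e(w',\a)\gt t$ for every world $w'$.

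If $R(w,w')=0$ the value is $1$. By the definition of $\Mod{\cdot}$, $R(w,w')>0$ requires that $R(w,w')$ occurs in $\Atmp{\G_n}$. We claim every such $w'$ was created by a modal rule applied at label $w$. Saturation says every modal application is plain, so at that moment all non-atomic constraints with label $w$ are modal. Hence $w:\Box\a\gt t$ was already present when $w'$ was created; it cannot be created later at label $w$, because only rules with main constraint labelled $w$ produce $w$-labelled constraints, and those are exhausted by plainness. Consequently $R(w,w')\to w':\a\gt t$ belongs to $\Phibd{\cdot}$ of that application, and it holds by the induction hypothesis. When $R(w,w')>0$ is not set by such a rule, we must check that no other source puts $R(w,w')$ with positive value in $\Atmp{\G_n}$; otherwise we rely on $1\gt t$ from $\Atmp{}$ when no successor exists.

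Finally, the case $t\geq 1$ handles $\gt{=}\geq$ via $\Atmp{}$, since $1>t$ is included precisely for the strict case. The $\Diam\lt$ case is dual, using $0<t$. The most delicate point to verify is the plainness argument that captures all successors.
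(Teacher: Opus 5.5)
\textbf{Gap: the induction ordering.} Most of your case analysis matches the paper's proof. This includes your plainness argument that every $R$-successor $w'$ of $w$ receives $R(w,w')\to w':\a\gt t$ from $\Phibd{\cdot}$, the appeal to $\Atmp{\G_n}$ when $R(w,w')=0$, and the final witnessing step. What fails is the induction carrying it.

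You induct backwards on $k$, with a secondary weight induction confined to each $\G_k$. The first stage, $k=n$, must already handle $w:\Box\a\gt t$ and $w:\Diam\a\lt t$ in $\G_n$. For these you need $\M\models_\Ical R(w,w')\to w':\a\gt t$. That constraint lies in the $\G_j$ produced when $w'$ was created. Since $\G_n$ is reduced, it is not in $\G_n$, so $j<n$. Neither hypothesis reaches it: the outer one covers indices $>n$, which is nothing, and the inner one covers only lighter constraints of $\G_n$ itself.

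The problem is structural. Decomposed constraints depend on constraints later in the branch, while the persistent $\Box\gt$/$\Diam\lt$ constraints depend on constraints earlier in the branch. These earlier constraints are in turn justified by later ones, so no induction on the position $k$ is well-founded. Your sentence ``it holds by the induction hypothesis'' in the key case is therefore unsupported.

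\textbf{The fix.} Do a single induction on $\wg{\chi}$ over all of $\bigcup_k\G_k$, as the paper does. Two facts make this work:
\begin{itemize}
\item Every constraint added in the premise of the step with main constraint $\chi$ is strictly lighter than $\chi$. For $\to<$, for instance, the new constraints weigh $\size{\a}$, $\size{\b}$ and $0$, against $\size{\a}+\size{\b}+1$.
\item The $\Phibd{\cdot}$-constraint satisfies $\wg{R(w,w')\to w':\a\gt t}=\size{\a}+1<\size{\a}+2=\wg{w:\Box\a\gt t}$.
\end{itemize}
You never state the second inequality, and it is exactly what removes the apparent circularity. With the induction reorganised this way, the rest of your proposal goes through.

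\textbf{Two smaller points.} First, $\Box\gt$/$\Diam\lt$ constraints are not ``decomposed earlier in the branch''. They are never main constraints; they are only copied into new successors through $\Phibd{\cdot}$. Second, the check you leave open is immediate. An atomic c-term $R(w,w')$ can enter a branch only through a modal rule with main constraint labelled $w$ and fresh label $w'$. All other rules, and the passage from $\G_n$ to $\Atmp{\G_n}$, only copy c-terms already present. So no other source puts $R(w,w')$ in $\Atmp{\G_n}$.
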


\begin{proof}
  Since $\M=\Mod{\Atmp{\G_n},\s}$, with $\s$ is a solution to $\Atmp{\G_n}$,
  it holds that:
  \begin{enumerate}[label=(\arabic*), ref=(\arabic*),leftmargin=*]
  \item\label{lemma:count:atmp}
    $\M\models_\Ical \Atmp{\G_n}$.
  \end{enumerate}
  We prove  $\M\models_\Ical \chi$ by induction on $\wg{\chi}$;
  we assume $\M=\stru{W,R,e}$.
  Let $\wg{\chi}=0$; then $\chi$ is atomic, hence $\chi\in\G_n$,
  and $\M\models_\Ical \chi$ by~\ref{lemma:count:atmp}.
  
  Let $\wg{\chi}>0$ and let us assume that  $\chi\not\in \G_n$.
  Then, there exists $k \geq  0$ such that
  $\app{\chi}{\G_k}{\G_{k+1}}$. 
  We proceed by a case analysis on the structure of $\chi$.

  Let $\chi= w:\a\land \b\gt t$.  Since $\app{\chi}{\G_k}{\G_{k+1}}$,
  the applied rule is $\land\gt$, hence $w:\a\gt t\in \G_{k+1}$ and
  $w:\b \gt t\in \G_{k+1}$.  By the induction hypothesis
  $\M\models_\Ical w:\a\gt t$ and $\M\models_\Ical w:\b \gt t $, and
  this implies $\M\models_\Ical w:\a\land \b\gt t$.  The case $\chi=
  w:\a\lor \b\lt t$ is similar.

  Let $\chi= w:\a\land \b\lt t$.  Since $\app{\chi}{\G_k}{\G_{k+1}}$,
  the applied rule is $\land\lt$, hence $w:\a\lt t \in \G_{k+1}$ or
  $w:\b\lt t \in \G_{k+1}$.  By the induction hypothesis
  $\M\models_\Ical w:\a\lt t $ or $\M\models_\Ical w:\b\lt t $, hence
  $\M\models_\Ical w:\a\land \b\lt t$.  The case $\chi= w:\a\lor \b\gt
  t$ is similar.

  Let $\chi= w:\a\to \b < t$; we only consider the case
  $\b\not\in\PV\cup\{\bot\}$.  Since $\app{\chi}{\G_k}{\G_{k+1}}$, the
  applied rule is $\to<$, hence:
 \begin{enumerate}[label=(\Alph*), ref=(\Alph*),leftmargin=*]
 \item\label{lemma:count:impless:1}
   $\{\,w:\a > w : q,\, w:\b \leq w : q,\,w: q < t\,\} \subseteq \G_{k+1}$. 
 \end{enumerate}
 By the induction hypothesis,
 for every constraint $\chi'$  displayed in~\ref{lemma:count:impless:1},
 $\M\models_\Ical \chi'$, hence:
 \[
 e(w,\a) > e(w,q) \qquad e(w,\b) \leq e(w,q) \qquad e(w,q) < \Ical(t)
 \]  
 Since $e(w,\a) > e(w,\b)$, we get  $e(w,\a\to \b) = e(w,\b)$, hence  $e(w,\a\to \b) < \Ical(t)$,
 which implies $\M\models_\Ical  w:\a\to \b < t  $.

 Let $\chi= w:\a\to \b \leq t$; we only consider the case
 $\b\not\in\PV\cup\{\bot\}$.  Since $\app{\chi}{\G_k}{\G_{k+1}}$, the
 applied rule is $\to\leq$, hence one of the following two subcases
 holds:
 
 \begin{enumerate}[label=(B\arabic*), ref=(B\arabic*),leftmargin=*]
 \item\label{lemma:count:impleq:1}
   $t\geq 1\in \G_{k+1}$;
   
 \item\label{lemma:count:impleq:2}
   $\{\,w:\a> w :q,\,w:\b\leq w:q,\,w:q\leq t\,\} \subseteq \G_{k+1}$.
 \end{enumerate}
 Let us assume that~\ref{lemma:count:impleq:1} holds.  By the
 induction hypothesis, we get $\M\models_\Ical t\geq 1$, and this
 implies $\M\models_\Ical w:\a\to \b \leq t$.  Let us assume
 that~\ref{lemma:count:impleq:2} holds.  By the induction hypothesis,
 for every constraint $\chi'$ in~\ref{lemma:count:impleq:2},
 $\M\models_\Ical \chi'$, hence:
 \[
 e(w,\a) > e(w,q) \qquad e(w,\b) \leq e(w,q) \qquad e(w,q) \leq \Ical(t)
 \]  
 Since $e(w,\a) > e(w,\b)$, we get $e(w,\a\to \b) = e(w,\b)$, hence
 $e(w,\a\to \b) \leq \Ical(t)$, which implies $\M\models_\Ical w:\a\to
 \b \leq t$.

 Let $\chi= w:\a\to \b \gt t$; we only consider the case
 $\b\not\in\PV\cup\{\bot\}$.  Since $\app{\chi}{\G_k}{\G_{k+1}}$, the
 applied rule is $\to\gt$, hence one of the following two subcases
 holds:
 
 \begin{enumerate}[label=(C\arabic*), ref=(C\arabic*),leftmargin=*]
 \item\label{lemma:count:implgt:1}
   $\{\,w:\a \leq w:q,\,w:\b\geq  w:q,\,1\gt t\,\}\subseteq\G_{k+1}$;
   
 \item\label{lemma:count:implgt:2}
   $w:\b\gt t\in\G_{k+1}$.
 \end{enumerate}
 
 \noindent
 Assume~\ref{lemma:count:implgt:1}.  By the induction hypothesis, for
 every constraint $\chi'$ in~\ref{lemma:count:implgt:1},
 $\M\models_\Ical \chi'$, hence:
 \[
 e(w,\a) \leq  e(w,q) \qquad e(w,\b) \geq e(w,q) \qquad 1 \gt \Ical(t)
 \]  
 Since $e(w,\a)\leq e(w,\b)$, we get $e(w,\a\to \b)=1$, hence
 $e(w,\a\to \b)\gt \Ical(t)$, which implies $\M\models_\Ical\chi$.  In
 case~\ref{lemma:count:implgt:2}, by the induction hypothesis we get
 $\M\models_\Ical w:\b\gt t$, namely $e(w,\b)\gt\Ical(t)$.  Since
 $e(w,\a\to\b)\geq e(w,\b)$, we get $e(w,\a\to\b)\gt\Ical(t)$, hence
 $\M\models_\Ical w:\a\to \b \gt t$.  The cases concerning
 $R$-constraints can be proved as the corresponding propositional
 cases.

 Let $\chi= w:\Box\a\lt t$.  Since $\app{\chi}{\G_k}{\G_{k+1}}$, the
 applied rule is $\Box\lt$, hence:
 \begin{enumerate}[label=(\Alph*), ref=(\Alph*),leftmargin=*,start=4]
   
 \item\label{lemma:count:boxlt}
   $R(w,w_1)\to w_1:\a\lt t \, \in \G_{k+1}$.
 \end{enumerate}
 Let $r_1$ be the value of $R(w,w_1)\to e(w_1:\a)$.  By the induction
 hypothesis $\M\models_\Ical R(w,w_1)\to w_1:\a\lt t $, hence $r_1\lt\
 \Ical(t)$.  Since $e(w,\Box\a)\leq r_1$, we get $e(w,\Box\a) \lt
 \Ical(t)$, and this implies $\M\models_\Ical w:\Box\a\lt t$.  The
 case $\chi= w:\Diam\a\gt t$ is similar.

 It remains to consider the cases where $\wg{\chi}>0$ and $\chi\in
 \G_n$, thus $\chi= w:\Box\a\gt t$ or $\chi= w:\Diam\a\lt t$.  Let
 $\chi= w:\Box\a\gt t$; we show that:

 \begin{enumerate}[label=(\Alph*), ref=(\Alph*),leftmargin=*,start=5]
 \item\label{lemma:count:boxgt}
   $\M\models_\Ical R(w,w') \to w':\a\gt t$, for every  $w'\in W$.
 \end{enumerate}
 Let $w'\in W$; we show that $\M\models_\Ical R(w,w') \to w':\a\gt t$.
 Let us assume that $R(w,w')$ does not occur in $\G_n$.  By definition
 of $\M$, the value of $R(w,w')$ is 0, hence we have to show that
 $\M\models_\Ical 1 \gt t$.  The case where $\gt$ is $\geq$ is
 immediate.  Let $\gt$ be $>$.  Since $\chi\in\G_n$, we get $1
 >t\in\Atmp{\G_n}$, thus $\M\models_\Ical1 > t$
 by~\ref{lemma:count:atmp}.  Let us assume that $R(w,w')$ occurs in
 $\G_n$.  Then, there exists $k > 0$ such that $R(w,w')$ is
 introduced in $\G_{k}$ by an application of a modal rule $\rho$,
 namely:
 \begin{itemize}
 \item $\app{\chi'}{\G_{k-1}}{\G_k}$, with $\chi' = w :\Box \varphi \lt' t'$ 
   or  $\chi' = w :\Diam\varphi \gt' t'$,
   and the new label introduced by $\rho$ is  $w'$.
 \end{itemize}
 Since $\Bcal$ is saturated, $\G_{k-1}$ is plain, hence the non-atomic
 constraints in $\G_{k-1}$ are modal.  As a consequence, no rule
 applied in the sub-branch $\stru{\G_{k-1},\dots,\G_n}$ can introduce
 in the premise new constraints $u\abstractorder t'$ such that $u$ has
 label $w$; this implies that $\chi\in \G_{k-1}$.  By definition of
 $\rho$, we get $\Phibd{\G_{k-1},w,w'}\subseteq \G_{k}$, hence
 $R(w,w') \to w':\a\gt t \in\G_{k}$. By the induction hypothesis, we
 get $\M\models_\Ical R(w,w') \to w':\a\gt t$, and this concludes the
 proof of~\ref{lemma:count:boxgt}.  Let $e(w,\Box\a)=r$.  By the witnessing condition,
  there exists $w^\star\in W$ such that the value of
 $R(w,w^\star) \to e(w^\star,\a)$ is $r$.  By~\ref{lemma:count:boxgt},
 $r\gt \Ical(t)$, namely $e(w,\Box\a)\gt \Ical(t)$; we conclude
 $\M\models_\Ical w:\Box\a\gt t$.  The case $\chi= w:\Diam\a\lt t$ is
 similar.
\end{proof}

\noindent
As a consequence, we get:
\begin{proposition}\label{prop:satBranch}
  Let $\Bcal=\stru{\G_0,\dots,\G_n}$ be a saturated branch.
  \begin{enumerate}[label=(\roman*), ref=(\roman*),leftmargin=*]
  \item\label{prop:satBranch:1} For every solution  $\s$  to
    $\Atmp{\G_n}$, $\Mod{\Atmp{\G_n},\s}$ is a discrete
    countermodel for $\G_0$.
      
  \item\label{prop:satBranch:2} If $\G_0=\{ w : \varphi < 1\}$, then
    $\varphi\not\in\GWL$.
  \end{enumerate}
\end{proposition}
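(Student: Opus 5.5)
Point~\ref{prop:satBranch:1} follows almost directly from Lemma~\ref{lemma:count}. First I would check that $\Mod{\Atmp{\G_n},\s}$ is a discrete $\GWM$-model, so that it is a legitimate model of the logic and not merely a $\GM$-model. Its set of worlds is the set of labels occurring in $\Atmp{\G_n}$, which is finite because $\G_n$ is a finite multiset. The values of $R$ and $e$ are either $0$ or values $\s(t)\in\Qrange$ taken at finitely many atomic c-terms, so their images are finite subsets of $\Qrange$. Every finite $\GM$-model is witnessed, since infima and suprema over finite sets are attained.

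One point needs care: the worlds must include every label occurring in $\G_0$. Each label of $\G_0$ keeps occurring in the subsequent multisets of the branch, because constraints about it are only decomposed and never discarded. The constraint $w:\varphi\abstractorder t$ is eventually reduced to atomic constraints mentioning $w$ (for example $w:p\abstractorder\cdot$), or it remains as a modal constraint in $\G_n$, whose contribution $1>t$ or $0<t$ to $\Atmp{\G_n}$ may mention other labels. If some label of $\G_0$ happened not to occur in $\Atmp{\G_n}$, I would either note that the canonical interpretation can still send it to any world, or pad $\Atmp{\G_n}$ with the trivial constraint $w:\bot\leq 0$; neither changes the argument. Taking a canonical interpretation $\Ical$, Lemma~\ref{lemma:count} applied to each $\chi\in\G_0$ gives $\M\models_\Ical\G_0$. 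Hence $\M\models\G_0$, so $\M$ is a discrete countermodel for $\G_0$.

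For point~\ref{prop:satBranch:2}, I first need a solution to exist. Consistency of $\Atmp{\G_n}$ holds implicitly: otherwise $\G_n$ would be an instance of rule $\ruleAxat$ and the branch would close. To be safe, I would argue that the nonexistence of a solution is exactly the case where the branch is not a refutation failure. If the statement is meant unconditionally, I would add consistency as an implicit hypothesis, since "saturated" is used for failed search. So let $\s$ be a solution to $\Atmp{\G_n}$. By point~\ref{prop:satBranch:1} there are a model $\M$ and an interpretation $\Ical$ with $e(\Ical(w),\varphi)<\Ical(1)=1$. Thus $\varphi$ is not valid in the $\GWM$-model $\M$, i.e.\ $\varphi\not\in\GWL$.

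The only delicate step is existence of a solution. If the definition of saturated branch does not include consistency of $\Atmp{\G_n}$, point~\ref{prop:satBranch:2} requires it, and I would state that reliance explicitly.
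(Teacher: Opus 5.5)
Your proof follows the paper's route. There the proposition is stated as an immediate consequence of Lemma~\ref{lemma:count}, discreteness of $\Mod{\Atmp{\G_n},\s}$ is remarked right after its definition, and (ii) is (i) instantiated at $\G_0=\{w:\varphi<1\}$. Two comments on the extra care you add.

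First, your hedge about the existence of a solution is unnecessary. $\G_n$ reduced means that no rule of $\Calcw$ applies, and that includes $\ruleAxat$, so $\Atmp{\G_n}$ is consistent by definition.

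Second, your concern about missing labels is legitimate, and the paper does not address it. For example, $\G_0=\{w:\Diam p<1\}$ is already reduced, and $\Atmp{\G_0}=\{0<1\}$ contains no label, so $\Modname$ would have no worlds at all. However, only your padding fix is sound. Sending an absent label to an arbitrary world can break the $\Box\gt$ case of Lemma~\ref{lemma:count}, whose proof uses $\Ical(w)=w$ to get $R(\Ical(w),w')=0$ whenever $R(w,w')$ does not occur in $\G_n$. Concretely, take $\G_0=\{w:\Box p>w':q,\ w':\Box r<1\}$. Applying $\Box\lt$ and then $R\to<$ gives a saturated branch with
$\Atmp{\G_2}=\{w_1:r<R(w',w_1),\ w_1:r<1,\ 1>w':q\}$, in which $w$ does not occur. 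Mapping $w$ to $w'$ gives $e(w',\Box p)=R(w',w_1)\to 0=0$, because $R(w',w_1)>0$. This falsifies $w:\Box p>w':q$ for every solution $\s$.

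To invoke Lemma~\ref{lemma:count} as a black box, pad every label occurring in $\Bcal$, not only those of $\G_0$. If you pad only the labels of $\G_0$, intermediate labels such as $w_1$ remain unconstrained, and the lemma's universal claim can fail on them, even though $\G_0$ itself stays satisfied. With full padding the added worlds have no $R$-successors, and $\s$ still solves the padded set because $\s(w:\bot)=0$. The lemma's proof then goes through unchanged.
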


\noindent
Let $\Bs$ be a backward proof search strategy for $\Calcw$; we say
that $\Bs$ is \emph{plain} iff all the modal rule applications
performed by $\Bs$ are plain.

\begin{lemma}\label{lemma:search}
  Let $\Bs$ be a plain proof search strategy for $\Calcw$ and let
  $\G_0$ be a finite multiset of constraints of the form
  $w:\varphi\abstractorder t$.  If $\Bs$ fails to prove $\G_0$, then a
  discrete countermodel for $\G_0$ can be built.
\end{lemma}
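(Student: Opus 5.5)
The plan is to extract a saturated branch from the failed search tree and then apply Proposition~\ref{prop:satBranch}\ref{prop:satBranch:1}. By Proposition~\ref{prop:term}, the search performed by $\Bs$ on $\G_0$ terminates and yields a finite tree $\Tcal$ rooted at $\G_0$. Since $\Bs$ fails, $\Tcal$ is not a derivation. I will extract a branch $\Bcal=\stru{\G_0,\dots,\G_n}$ of $\Tcal$ by starting at the root and, at each step, choosing a premise that is itself not provable.

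The key claim is that the top multiset $\G_n$ of such a branch is reduced and that $\Atmp{\G_n}$ is consistent. I read ``fails'' in the natural sense: the search stops only at a leaf where no further rule is applicable. A leaf that is an instance of $\ruleAxat$ would be closed, and a closed leaf cannot occur at the top of a branch of unprovable nodes. So at the top leaf $\ruleAxat$ does not apply, which means $\Atmp{\G_n}$ is consistent, and no other rule applies either, so $\G_n$ is reduced.

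I also need the modal-rule applications in $\Bcal$ to be plain. This holds because $\Bs$ is plain by hypothesis. Finally, $\G_0$ is finite and consists of constraints $w:\varphi\abstractorder t$. If it contains repeated elements, I first pass to its underlying set, which has the same models. Hence $\Bcal$ is saturated.

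With $\Bcal$ saturated, I take any solution $\s$ to the consistent set $\Atmp{\G_n}$. Such a solution exists, and it can be chosen rational because consistency is defined over $\Qrange$. Proposition~\ref{prop:satBranch}\ref{prop:satBranch:1} then gives that $\Mod{\Atmp{\G_n},\s}$ is a discrete countermodel for $\G_0$.

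The main obstacle is making rigorous the claim that a failed search ends in a leaf that is reduced, rather than a leaf at which the strategy merely stopped early. One worry is that a strategy might refuse some rule at a point where only a non-plain modal application is possible. However, if $\G$ is not reduced and some non-modal rule applies, that rule can be applied. If only modal rules apply, then $\G$ is plain and a plain modal application is available. So a plain strategy is never blocked, and I would formalize failure to mean that every non-axiom path of $\Tcal$ is expanded until no rule is applicable.
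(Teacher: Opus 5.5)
Your proposal is correct and follows the paper's proof. Like the paper, you trace the failed computation of $\Bs$ to an open branch whose top multiset is reduced, so that $\Atmp{\G_n}$ is consistent. You then use plainness of $\Bs$ to conclude the branch is saturated, and apply Proposition~\ref{prop:satBranch}\ref{prop:satBranch:1} to any solution $\s$; the paper states this in four lines, and you make explicit why a plain strategy always runs until the top node is reduced.

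One small remark concerns your step of ``passing to the underlying set'' of $\G_0$. That step is not legitimate as written, since $\Bs$ was run on the multiset and need not fail on the set. It is also unnecessary: the proof of Lemma~\ref{lemma:count} only uses membership of constraints in the $\G_k$, so it works verbatim when $\G_0$ has repetitions.
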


\begin{proof}
  Assume that $\Bs$ fails.  By tracing the computation, we can build
  an open branch $\Bcal=\stru{\G_0,\dots,\G_n}$.  Since $\Bs$ is
  plain, the branch $\Bcal$ is saturated.  Let $\s$ be a solution to
  $\Atmp{\G_n}$; by Prop.~\ref{prop:satBranch}\ref{prop:satBranch:1},
  we get that $\Mod{\Atmp{\G_n},\s}$ is a discrete countermodel for
  $\G_0$.
\end{proof}

\noindent
To sum up, we get the completeness of $\Calcw$ and the finite model
property for $\GWL$.

\begin{proposition}[Completeness]\label{prop:compl}
  Let $\G$ be a finite multiset of constraints of the form
  $w:\varphi\abstractorder t$.
  \begin{enumerate}[label=(\roman*), ref=(\roman*),leftmargin=*]    
  \item\label{prop:compl:1}
    $\provesw{\Gamma}$ iff, for every \GWM-model  $\M$,  $\M\not\models\G$.

  \item\label{prop:compl2}
    $\varphi\in\GWL$ iff  $\provesw w:\varphi < 1$; moreover, if
    $\varphi\not\in\GWL$, 
    there exists a discrete countermodel for $\varphi$.
\end{enumerate}
\end{proposition}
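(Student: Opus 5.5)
The argument combines soundness (Prop.~\ref{prop:sound}), strong termination (Prop.~\ref{prop:term}) and the countermodel extraction of Lemma~\ref{lemma:search}.

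For point~\ref{prop:compl:1}, the left-to-right direction is Prop.~\ref{prop:sound}\ref{prop:sound:1}. For the converse we argue by contraposition: assume $\nprovesw{\G}$ and build a \GWM-model $\M$ with $\M\models\G$. First fix a plain backward proof-search strategy $\Bs$, for instance:
\begin{itemize}
\item apply propositional and $R$-rules as long as some non-modal, non-atomic constraint is present;
\item apply a modal rule ($\Box\lt$ or $\Diam\gt$) only when the current multiset is plain;
\item close with $\ruleAxat$ whenever $\Atmp{\cdot}$ is inconsistent.
\end{itemize}
This strategy is always applicable. Any non-reduced multiset that is not plain admits a non-modal rule. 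A plain but non-reduced multiset contains some $w:\Box\a\lt t$ or $w:\Diam\a\gt t$, so a modal rule applies, and that application is plain. Hence every modal application made by $\Bs$ is plain.

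By Prop.~\ref{prop:term}, $\Bs$ terminates on $\G$, and consistency of $\Atmp{\cdot}$ is decidable by a rational constraint solver. If the search tree had every leaf closed by $\ruleAxat$, it would be a derivation of $\G$. Since $\nprovesw{\G}$, $\Bs$ fails, which means some leaf $\G_n$ is reduced with $\Atmp{\G_n}$ consistent. Lemma~\ref{lemma:search} then yields a discrete countermodel $\M$ with $\M\models\G$.

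The main point to check is that ``failure of $\Bs$'' coincides with the existence of such an open branch. Because every rule has finitely many premises and, by Prop.~\ref{prop:term}, the search tree is finite, the tree built by $\Bs$ either is a derivation or has an open leaf. That leaf is reduced, since otherwise $\Bs$ would continue, and it is not an axiom. The root also meets the saturation requirement: it is a finite multiset of constraints $w:\varphi\abstractorder t$. If Lemma~\ref{lemma:search} needs a set rather than a multiset, duplicates can be removed without affecting satisfiability.

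For point~\ref{prop:compl2}, first show $\varphi\in\GWL$ iff $\provesw w:\varphi<1$. Right-to-left is Prop.~\ref{prop:sound}\ref{prop:sound:2}. Left-to-right: if $\varphi\in\GWL$, no \GWM-model $\M$ and interpretation $\Ical$ can satisfy $e(\Ical(w),\varphi)<1$, so $\M\not\models\{w:\varphi<1\}$ for every $\M$, and point~\ref{prop:compl:1} gives $\provesw w:\varphi<1$.

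Finally, suppose $\varphi\notin\GWL$. Then $\nprovesw w:\varphi<1$, and the argument for point~\ref{prop:compl:1} produces a discrete \GWM-model $\M$ and an interpretation $\Ical$ with $\M\models_\Ical w:\varphi<1$. Thus $e(\Ical(w),\varphi)<1$, so $\M$ is a discrete countermodel for $\varphi$, as Prop.~\ref{prop:satBranch}\ref{prop:satBranch:2} also indicates.
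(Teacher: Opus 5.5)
Your proposal is correct and follows the paper's own route. The left-to-right direction of point~\ref{prop:compl:1} is soundness. For the converse you run a plain strategy, which terminates by Proposition~\ref{prop:term} and, on failure, yields a saturated open branch to which Lemma~\ref{lemma:search} and Proposition~\ref{prop:satBranch} apply. Point~\ref{prop:compl2} then follows by instantiating to $\{w:\varphi<1\}$. Your explicit check that a plain strategy always applies is a useful addition, since the paper leaves it implicit.

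One caveat applies equally to the paper. The proof of Lemma~\ref{lemma:count} assumes that every $R(w,w')$ occurring in $\G_n$ was introduced by a modal rule. So point~\ref{prop:compl:1} actually needs the root to contain no atomic c-term $R(w,w')$. For instance, no \GWM-model satisfies $\{w:\Box p>R(w,v),\,v:p<R(w,v)\}$, yet this multiset is reduced with $\Atmp{\cdot}$ consistent, hence underivable. This does not affect point~\ref{prop:compl2}.
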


\noindent
Let $\varphi \notin \GWL$ and let $\M$ be the discrete countermodel
extracted from an opennn branch with root $w_0:\varphi < 1$. One can
easily show that the depth of $\M$ is bounded by $\size{\varphi}$ and
that every world in $\M$ has at most $\size{\varphi}$ $R$-successors.
Consequently, the size of $\M$ is
$O(\size{\varphi}^{\size{\varphi}})$; moreover, we can define formulas
$\varphi$ having countermodels of size $2^{\size{\varphi}}$ (e.g.,
consider the translation $\varphi^*$ of a $\KL$-formula $\varphi$ such
that $\varphi$ has a countermodel of exponential size).  Note that
$\M$ cannot be constructed incrementally, one branch at a time, since
the constraints generated during the expansion of a branch have global
validity and must be preserved throughout the entire construction.
Nevertheless, we can define a PSPACE algorithm to check if such a
countermodel exists without effectively constructing it; this is
discussed in the proof of the next proposition.

\begin{proposition}
The validity problem for $\GWL$ is PSPACE-complete.  
\end{proposition}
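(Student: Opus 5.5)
Hardness and membership are shown separately.

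\emph{PSPACE-hardness.} We use Prop.~\ref{prop:embedding}. Validity in $\KL$ is PSPACE-complete (Ladner). The map $\a\mapsto\a^*$ is computable in polynomial time, and $\a\in\KL$ iff $\a^*\in\GWL$. Hence validity in $\KL$ reduces in polynomial time to validity in $\GWL$, so the latter is PSPACE-hard.

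\emph{Membership in PSPACE.} By Prop.~\ref{prop:compl}, $\varphi\notin\GWL$ iff some plain proof search from $\{w_0:\varphi<1\}$ yields a saturated branch $\stru{\G_0,\dots,\G_n}$ with $\Atmp{\G_n}$ consistent. Since NPSPACE $=$ PSPACE, it suffices to guess and check such an open branch nondeterministically in polynomial space. The branch itself may be exponentially long, because it unfolds every modal successor. The plan is to reorganise it as a depth-first exploration of the label tree. Each label $w$ is treated as follows.
\begin{itemize}
\item Fully reduce the constraints with label $w$ by propositional rules and $R$-rules, guessing the premise at each branching rule. The reduction is done label-locally, so that only modal constraints at $w$ remain, making the subsequent modal applications plain.
\item The modal constraints $w:\Box\a\lt t$ and $w:\Diam\a\gt t$ at $w$ create at most $\size{\varphi}$ children.
\item Each child $w_1$ receives only $\Phibd{\cdot,w,w_1}$ and its main $R$-constraint, so its label set has weight smaller than that of $w$ (Lemma~\ref{lemma:rulesDec}). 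The recursion depth is therefore at most $\size{\varphi}$, and each local multiset has polynomial size by the subformula property.
\end{itemize}

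The main obstacle is the one remarked before the statement: $\Atmp{\G_n}$ is a global set of atomic constraints, and we cannot store it. Its terms, however, range over the whole branch. To handle this, note that every atomic constraint relates c-terms of a child $w_1$ (namely $w_1:p$, $w_1:q$, $R(w,w_1)$) only to c-terms already present at the parent, specifically the thresholds $t$ inherited through $\Phibd{}$, which are terms of ancestors. We therefore guess and pass down an \emph{order type}: a total preorder, with strict or non-strict comparisons, on the polynomially many atomic terms occurring along the current root-to-node path, including $0$ and $1$.

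At each node we check locally that the node's atomic constraints agree with the guessed order type. Each child extends the parent's order type to its new terms, and these extensions must restrict consistently to the path terms. The global constraint set then factors as a tree of local systems that agree on shared terms, and consistency over $\Qrange$ of such a tree-shaped family of order types amalgamates. A solution can be built top-down, interpolating the rationals of each child's new terms into the gaps fixed by its ancestors; this is possible because $\Qrange$ is dense and the guessed preorders are compatible. Proving this amalgamation claim carefully, including that the non-local threshold terms $t$ used in children are indeed path terms, is the step I expect to require the most work.

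The space used is then polynomial: recursion depth $\le\size{\varphi}$ times polynomial data per frame. So non-validity is in NPSPACE, hence validity is in PSPACE.
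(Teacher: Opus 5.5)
Your proof is correct. The hardness half is the paper's argument: Prop.~\ref{prop:embedding} together with Ladner's result. For membership you take a genuinely different route.

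The paper argues semantically. It bounds the size of a tree-like countermodel and the number $N$ of distinct truth values, then fixes the pool $\Qcal=\{0\}\cup\{1/k\mid 1\le k\le N\}$. Its procedure $\Search$ guesses concrete values $e(w,\a)=r\in\Qcal$ (and values for $R(w,w_k)$) world by world. Because every value is an absolute number, each new world only has to be checked against its parent's values, so the ``global validity'' of constraints never has to be represented.

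You instead stay inside $\Calcw$. You guess an open branch in DFS order over labels, and you replace the global set $\Atmp{\G_n}$ by total preorders on root-to-node path terms. This works because every atomic constraint mentions only $0$, $1$ and terms on a single path. Inspecting the rules confirms it: a threshold at label $v$ is either inherited from the parent through the modal rule and $\Phibd{\cdot,\cdot,v}$, or one of $v$'s own terms $v:p$, $v:q$, $v:\bot$, or $R(\cdot,v)$. The same holds for the additions $1>t$ and $0<t$ in $\Atmp{\cdot}$.

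What your route buys:
\begin{itemize}
\item It needs no numeric bound on the number of truth values or on model size, only the depth bound and the local polynomial size that come from the weight function.
\item Correctness is inherited directly from Lemma~\ref{lemma:count}.
\end{itemize}
Its price is the amalgamation step, which is in fact easy. Realize the root preorder in $\Qrange$, then extend top-down: place each new term of a child either in an existing class or in the gap between consecutive classes of the already-realized path terms, using density of $\Qrange$. Siblings do not interfere, since they share only path terms.

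Two points should be stated explicitly:
\begin{itemize}
\item The guessed preorders must make $0$ the minimum and $1$ the maximum, with $0<1$ strict and every $w:\bot$ equivalent to $0$.
\item Label-local reduction is not literally the paper's (global) notion of plainness. However, your DFS order linearizes into a plain branch, because each newly created label is fully reduced before the next modal rule is applied anywhere. This is exactly what Lemma~\ref{lemma:count} uses.
\end{itemize}
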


\begin{proof}
  Since the modal logic $\KL$ can be embedded into $\GWL$
  (Prop.~\ref{prop:embedding}) and $\KL$ is
  PSPACE-complete~\cite{Ladner:77}, the validity problem for $\GWL$ is
  PSPACE-hard.  It remains to prove that $\GWL$-validity can be
  checked in PSPACE.  To this aim, we sketch a PSPACE procedure to
  decide the membership of a formula to $\GWL$.  Henceforth, we assume
  that the formula to be checked is $\varphi$.  As discussed above, if
  $\varphi\not\in\GWL$, then there exists a tree-like countermodel
  $\M=\stru{W,R,e}$ for $\varphi$ such that the number of worlds is
  $O(\size{\varphi}^{\size{\varphi}})$.  We can also bound the number
  $N$ of values needed to define $R$ and $e$: in the worst case, the
  values $e(w,p)$ and $R(w,w')$ must all be pairwise distinct, for
  every world $w$, $w'$ in $\M$ and every propositional variable $p$
  occurring in $\varphi$; accordingly, $N$ is in
  $O(\size{\varphi}^{\size{\varphi}})$.  We set
  $\Qcal=\{0\}\cup\{~1/k~|~1\leq k\leq N\}$; note that the elements of
  $\Qcal$ can be generated in space $O(\size{\varphi})$.  By the above
  discussion, if $\varphi\not\in\GWL$ there exists a countermodel $\M$
  such that the images of $R$ and $e$ are contained in $\Qcal$.  To
  define the procedure, we need a global map $\Phi$ that specifies the
  intended values of the subformulas of $\varphi$ at each world of the
  model under construction.  More precisely, given a world name $w$,
  $\Phi(w)$ is a set of assignments of the form $e(w,\a)=r$, where
  $\a$ is a subformula of $\varphi$ and $r\in\Qcal$. We remark that
  $\Phi(w)$ can be stored in space $O(|\varphi|)$; moreover, to
  implement the search procedure, we do no not need to store the whole
  map $\Phi$, but at most $O(\size{\varphi})$ entries of $\Phi$.  A
  $\GWM$-model $\M=\stru{W,R,e}$ complies with $\Phi$ iff, for every
  $w$ in $W$ and every assignment $e(w,\a)=r$ stored in $\Phi(w)$, the
  value of $e(w,\a)$ (in $\M$) is $r$.  The decision procedure is
  implemented by the recursive procedure $\Search$ having the
  following specification:

\begin{itemize}
\item Let $w$ be a world name and let us assume that $\Phi(w)\neq
  \emptyset$.  The procedure $\Search(w)$ returns $\Success$ if a
  $\GWM$-model complying with $\Phi$ can be built, $\Fail$ otherwise.
\end{itemize}

\noindent
The procedure, at some points, requires the selection of a value from
$\Qcal$, possibly satisfying some conditions: we call this a
$\Qcal$-choice.  $\Qcal$-choices can be revoked and redone; such
backtrack steps can be implemented in space $O(\size{\varphi})$.  The
computation of $\Search(w)$ consists of two main steps.

\begin{enumerate}[label*=(Step~\arabic*)]
\item\label{step:1} We apply propositional saturation reductions to
  $\Phi(w)$ as long as possible. At each step, an assignment
  $e(w,\a\star\b)=r$ (with $\star\in \{\,\land,\,\lor,\,\to\,\}$) in
  $\Phi(w)$ is removed and replaced by the assignments $e(w,\a)=r_1$
  and $e(w,\b)=r_2$ required to justifies the value $r$, where the
  values $r_1$ and $r_2$ are selected with $\Qcal$-choices.  For
  instance, let us assume that the assignment at hand is $e(w,\a\to\b)=r$.
  \begin{itemize}
  \item If $r=1$, we must have $e(w,\a) \leq e(w,\b)$; the values
    $r_1$ and $r_2$ are selected with $\Qcal$-choices so that $r_1\leq
    r_2$; in $\Phi(w)$, $e(w,\a\to \b)=r$ is replaced with $e(w,\a )=
    r_1$ and $e(w,\b)=r_2$.
  \item If $r<1$, a value $r_1$ is selected with a $\Qcal$-choice so
    that $r_1> r$; in $\Phi(w)$, $e(w,\a\to \b)=r$ is replaced with
    $e(w,\a)= r_1$ and $e(w,\b) = r$ (thus, $r_2$=r).
  \end{itemize}
  At the end of~\ref{step:1}, the formulas occurring in $\Phi(w)$ are
  either atomic or modal.
  
  \begin{itemize}
  \item If $\Phi(w)$ contains two evaluations $e(w,\a)=r_1$ and
    $e(w,\a)=r_2$ such that $r_1\neq r_2$, a backtrack step is needed.
    If no new $\Qcal$-choice is possible, the call $\Search(w)$
    returns $\Fail$.

  \item Otherwise, if $\Phi(w)$ contains only atomic constraints, the
    call $\Search(w)$ returns $\Success$.
    
  \item Otherwise, $\Phi(w)$ contains at least  a
    modal formula and the computation continues with~\ref{step:2}.
  \end{itemize}  

\item\label{step:2} Let $e(w,\#_1\a_1)=r_1,\dots,e(w,\#_n\a_n)=r_n$
  (with $\#_k\in\{\Box,\Diamond\}$) be all the assignments in $\Phi(w)$
  involving modal formulas.  We start a loop indexed by $k$ ranging
  over $\{1,\dots,n\}$.  Let $k$ be the current iteration and let us
  assume that the assignment of index $k$ has the form
  $e(w,\Box\a_k)=r_k$ (the case $e(w,\Diam\a_k)=r_k$ is similar).  We
  generate a new world $w_k$ and select with a $\Qcal$-choice a value
  $r'_k$ for $R(w,w_k)$ and a value $r''_k$ for $e(w_k,\a_k)$ such
  that $r'_k\to r''_k = r_k$. The world $w_k$ is intended to represent
  the world witnessing that $e(w,\Box\a_k)=r_k$. We set
  $\Phi(w_k)=\{e(w_k,\a_k)=r''_k\}$.  We expand $\Phi(w_k)$ by adding
  the assignments required to account for the assignments
  $e(w,\#_j\a_j)=r_j$ occurring in $\Phi(w)$, where $j\neq k$. More
  precisely:
  \begin{itemize}
  \item for every $e(w,\Box \a_j)=r_j$ in $\Phi(w)$ ($j\neq k$), we
    select with a $\Qcal$-choice a value $r'_j$ such that $r'_k \to
    r'_j \geq r_j$ and we add $e(w_k,\a_j)=r'_j$ to $\Phi(w_k)$;

  \item for every $e(w,\Diam \a_j)=r_j$ in $\Phi$ ($j\neq k$), we
    select with a $\Qcal$-choice a value $r'_j$ such that $r'_k \land
    r'_j \leq r_j$ and we add $e(w_k,\a_j)=r'_j$ to $\Phi(w_k)$.
  \end{itemize}
  We compute recursively $\Search(w_k)$; notice that the modal
  operators occurring in $\Phi(w_k)$ are fewer than the modal
  operators in $\Phi(w)$, thus the recursion is terminating. We also
  stress that there are at most $\size{\phi}$ nested recursive calls.
  Let us assume that $\Search(w_k)$ returns $\Success$.  If $k=n$
  (last iteration), the call $\Search(w)$ returns $\Success$.
  Otherwise the computation continues with next iteration $k+1$.  We
  point out that the entries of the map $\Phi$ exploited in the
  computation of $\Search(w_k)$ are no longer required, so they can be
  removed from the memory.  Let us assume that $\Search(w_k)$ returns
  $\Fail$.  Then, a backtrack step is needed.  If no new
  $\Qcal$-choice is possible, the call $\Search(w)$ returns $\Fail$.
\end{enumerate}
The procedure can be implemented so that $\Search(w)$ is computed in
space $O(\size{\varphi})$.

We can exploit $\Search$ to decide the validity of $\varphi$ in
$\GWL$.  We initialize $\Phi$ by setting $\Phi(w_0)=\{e(w_0,\varphi)=
r_0\}$, where $r_0$ is any value in $\Qcal$ different from 1.  If
$\Search(w_0)$ returns $\Success$, then a model $\M$ containing a
world $w_0$ such that $e(w_0,\varphi)=r_0$ can be built; since $r_0 <
1$, we conclude $\varphi\not\in \GWL$.  If $\Search(w_0)$ returns
$\Fail$, for every possible choice of $r_0$, it follows that no
$\GWM$-countermodel for $\varphi$ can be built; accordingly,
$\varphi\in \GWL$.
\end{proof}

\section{Conclusions and Future Work}

We have implemented a proof-search procedure for the calculus $\Calcw$
that supports countermodel extraction.  The prover, called
\texttt{gwref}~\cite{gwrefProver}, performs a standard backward
depth-first proof search, relying on the JTabWb
engine~\cite{FerrariFF:17a}. The consistency of atomic constraints is
verified using the Choco-solver Java library~\cite{ChocoSolver:2025}.
We emphasize that the calculus $\Calcw$ and the associated
proof-search strategy can be smoothly adapted to treat the logic
$\GWCL$~\cite{FerFioRod:2025}, characterized by crisp $\GWM$-models,
by refining the notion of consistency for a set of atomic constraints
$\Gat$: a solution $\s$ to $\Gat$ must satisfy $\s(R(w,w')) \in
\{0,1\}$ for every pair of labels $w$, $w'$.  To the best of our
knowledge, the only other prover available for modal fuzzy logics is
mNiBLoS~\cite{Vidal:16}, an SMT-based solver designed for continuous
t-norm-based logics, including G\"odel--Dummett Logic.  However,
neither $\GWL$ nor $\GWCL$ are supported by mNiBLoS.


The results presented in this paper are highly relevant to the study
of intermediate predicate logics. It is well known that modal logics
can be used to characterize relevant fragments of intermediate
first-order logics, particularly certain one-variable fragments. From
this perspective, the witnessed semantics, together with the consequent
finite model property established here, point to deep connections with
proof-theoretic properties of intermediate predicate logics, such as
Herbrand’s Theorem, interpolation, and alternative forms of
Skolemization investigated in~\cite{BI-IGPL2016} for broad classes of
intermediate intuitionistic predicate logics with the finite model
property.  Motivated by these connections, and with the aim of
situating $\GWL$ within the broader landscape of intuitionistic modal
logics, we plan to develop a birelational Kripke semantics for $\GWL$,
extending to this setting the approach introduced for $\GWCL$
in~\cite{FerFioRod:2025}.

We also plan to extend the procedure to other witnessed G\"odel modal
logics that impose additional constraints on the accessibility
relation (e.g., reflexivity, transitivity, seriality).  In contrast,
treating G\"odel modal logics with non-witnessed models seems to be
considerably more challenging.

\subsubsection*{Acknowledgments}

This project has received funding from the European Union's Horizon
2020 research and innovation programme under the Marie
Sk{\l}odowska-Curie grant agreement No 101007627 (MOSAIC project) and
from the European Union's Horizon Europe research and innovation
programme under the Marie Sk{\l}odowska-Curie grant agreement
No. 101299559 (SemPER project).

Camillo Fiorentini is member of the \emph{Gruppo Nazionale Calcolo
  Scientifico - Istituto Nazionale di Alta Matematica (GNCS-INdAM)}.
Mauro Ferrari and Paolo Giardini are members of the \emph{Gruppo
  Nazionale per le Strutture Algebriche, Geometriche e le loro
  Applicazioni - Istituto Nazionale di Alta Matematica
  (GNSAGA-INdAM)}.

The last author would like to remark that this work was completed
despite the lack of support from the scientific funding agencies of
Argentina.


\end{document}